\theoremstyle{plain}
\newtheorem{theorem}{Theorem}[section]
\newtheorem{lemma}[theorem]{Lemma}
\newtheorem{proposition}[theorem]{Proposition}
\newtheorem{corollary}[theorem]{Corollary}
\theoremstyle{definition}
\newtheorem{definition}[theorem]{Definition}
\theoremstyle{remark}
\tikzset{node distance=2.8cm, auto, semithick}
\tikzstyle{every state}=[fill=black!15, node distance=3.1cm]
\def\phi{\varphi}
\def\NAT{\mathbb{N}}
\def\td{\mathit{td}}
\def\edd{\mathop{ed}_\C}
\def\C{\mathcal{C}}
\def\H{\mathcal{H}}
\def\cover{\mathcal{X}}
\def\str{\mathop{str}}
\def\apex#1{{#1}^{\text{apex}}}
\newcommand{\CC}{\C}
\newcommand{\dist}{\ensuremath{\mathrm{dist}}}
\title{Fixed-parameter Tractable Distances to Sparse Graph Classes}
\author{Jannis Bulian}
\author{Anuj Dawar}
\affil{University of Cambridge Computer Laboratory}
\begin{document}

\maketitle

\begin{abstract}
We show that for various classes $\CC$ of sparse graphs, and several measures of distance to such classes (such as edit distance and elimination distance), the problem of determining the distance of a given graph $G$ to $\CC$ is fixed-parameter tractable.  The results are based on two general techniques.  The first of these, building on recent work of Grohe et al.\ establishes that any class of graphs that is slicewise nowhere dense and slicewise first-order definable is $\FPT$.  The second shows that determining the elimination distance of a graph $G$ to a minor-closed class $\CC$ is $\FPT$.

\end{abstract}

\section{Introduction}

The study of parameterized algorithmics for graph problems has thrown
up a large variety of structural parameters of graphs.  Among these
are parameters that measure the \emph{distance} of a graph $G$ to a
class $\CC$ in some way.  The simplest such measures are those that
count the number of vertices or edges that one must delete (or add) to
$G$ to obtain a graph in $\CC$.  A common motivation for studying such
parameters is that if a problem one wishes to solve is tractable on
the class $\CC$, then the distance to $\CC$ provides an interesting
parameterization of that problem (called \emph{distance to triviality}
by Guo et al.~\cite{guo_structural_2004}).  Other examples of this include the study of modulators to graphs of bounded tree-width in the context of kernelization (see~\cite{FLMS12,Gajarsky13}) or the parameterizations of colouring problems (see~\cite{Marx06}).
On the other hand, determining the distance of an input graph $G$ to a class $\CC$ is, in general, a computationally challenging problem in its own right.  Such problems have also been extensively studied with a view to establishing their complexity when parameterized by the distance.  A canonical example is the problem of determining the size of a minimum vertex cover in a graph $G$, which is just the vertex-deletion distance of $G$ to the class of edge-less graphs.
More generally, Cai~\cite{Cai96} studies the parameterized complexity of distance measures defined in terms of addition and deletion of vertices and edges to hereditary classes $\CC$.  Counting deletions of vertices and edges gives a rather simple notion of distance, and many more involved notions have also been studied.  Classic examples include the crossing number of a graph which provides one notion of distance to the class of planar graphs or the treewidth of a graph which can be seen as a measure of distance to the class of trees.  Another recently introduced measure is \emph{elimination distance}, defined in~\cite{BD14} where it was shown that graph isomorphism is $\FPT$ when parameterized by elimination distance to a class of graphs of bounded degree.

In this paper we consider the fixed-parameter tractability of a
variety of different notions of distance to various different classes
$\CC$ of sparse graphs.  We establish two quite general techniques for
establishing that such a distance measure is $\FPT$.  The first builds
on the recent result of Grohe et al.~\cite{Grohe:2014ec} which shows
that the problem of evaluating first-order formulas on
any \emph{nowhere dense} class of graphs is $\FPT$ with the formula as
parameter.  We extract from their proof of this result a general
statement about the fixed-parameter tractability of definable sparse
classes.  To be precise, we show that parameterized problems that are
both \emph{slicewise nowhere dense} and \emph{slicewise first-order
definable} (these terms are defined precisely below) are $\FPT$.  As
an application of this, it follows that if $\CC$ is a nowhere dense
class of graphs that is definable by a first-order formula, then the
parameterized problem of determining the distance of a graph $G$ to
$\CC$ is $\FPT$, for various notions of distance that can be
themselves so defined.  In particular, we get that various forms of
edit distance to classes of bounded-degree graphs are $\FPT$ (a result
established by Golovach~\cite{Golovach:2014eq} by more direct
methods).  Another interesting application is obtained by considering
elimination distance of a graph $G$ to the class $\CC$ of empty
graphs.  This is nothing other than the \emph{tree-depth} of $G$.
While elimination distance to a class $\CC$ is in general not
first-order definable, it is in the particular case where $\CC$ is the
class of empty graphs.  Thus, we obtain as an application of our
method the result that tree-depth is $\FPT$, a result previously known
from other algorithmic meta theorems (see~\cite[Theorem~17.2]{sparsity}).  The method of establishing that a parameterized problem is $\FPT$ by establishing that it is slicewise nowhere dense and slicewise first-order definable appears to be a powerful method of some generality which will find application beyond these examples.

Our second general method specifically concerns elimination distance to a minor-closed class $\CC$.  We show that this measure is fixed-parameter tractable for any such $\CC$, answering an open question posed in~\cite{BD14}.  Note that while a proper minor-closed class is always nowhere dense, it is not generally first-order definable (for instance, neither the class of acyclic graphs nor the class of planar graphs is), and elimination distance to such a class is also not known to be first-order definable.  Thus, our results on the tractability of slicewise first-order definable classes do not apply here.  Instead, we build on work of Adler et al.~\cite{Adler:2008wm} to show that from a finite list of the forbidden minors characterising $\CC$, we can compute the set of forbidden minors characterising the graphs at elimination distance $k$ to $\CC$.  Adler et al.\ show how to do this for apex graphs, from which one immediately obtains the result for graphs that are $k$ deletions away from $\CC$.  To extend this to elimination distance $k$, we show how we can construct the forbidden minors for the closure of a minor closed class under disjoint unions.

In Section~\ref{S:prelim} we present the definitions necessary for the
rest of the paper.  Section~\ref{S:nd} establishes our result for
slicewise first-order definable and slicewise nowhere dense problems
and gives some applications of the general method.
Section~\ref{S:minor} establishes that the problem of determining
elimination distance to any minor-closed class is $\FPT$.  Some open
questions are discussed in Section~\ref{S:conclusion}.
% TODO: remove for full version:
%Due to limitations of space, some proofs are deferred to the appendix.

\section{Preliminaries}
\label{S:prelim}

% Model Theory
\paragraph{First-order logic.}
We assume some familiarity with first-order logic for Section~\ref{S:nd}.
A \emph{(relational) signature} $\sigma$ is a finite set of relation symbols, each with an associated arity.
A \emph{$\sigma$-structure} $A$ consists of a set $V(A)$ and for each $k$-ary relation symbol $R \in \sigma$
a relation $R(A) \subseteq V(A)^k$. Our structures will mostly be (coloured) graphs, so $\sigma = \{E\}$ or
$\sigma = \{E, C_1, C_2, \dots, C_r\}$ where $E$ is binary and the $C_i$ are unary relation symbols. A graph $G$
is then a $\sigma$-structure with vertex set $V(G)$, edge relation $E(G)$, and colours $C_i(G)$.

A first-order formula $\phi$ is recursively defined by the following rules:
\[
\phi := R(x_1, \dots, x_r) \mid
  x = y  \mid
  \lnot \phi \mid
  \phi \lor \phi \mid
  \exists x . \phi.
\]
We also use the following abbreviations:
\[
\phi \land \psi := \lnot (\lnot \phi \lor \lnot \psi), \qquad \forall x . \phi := \lnot \exists . \lnot \phi.
\]
The \emph{quantifier rank} of a formula $\phi$ is the nesting depth of quantifiers in $\phi$.
For a more detailed presentation we refer to Hodges~\cite{Hodges:1997we}.

% FPT
\paragraph{Parameterized Complexity.}
Parameterized complexity theory is a two-dimensional approach to the
study of the complexity of computational problems.  We find it
convenient to define problems as classes of structures rather than
strings. 
A \emph{problem} $Q \subseteq \str(\sigma)$ is an (isomorphism-closed)
class of
$\sigma$-structures given some signature $\sigma$. A
\emph{parameterization} is a function $\kappa : \str(\sigma) \to
\mathbb{N}$. We say that $Q$ is \emph{fixed-parameter tractable} with
respect to $\kappa$ if we can decide whether an input $A \in
\str(\sigma)$ is in $Q$ in time $O(f(\kappa(A)) \cdot |x|^c)$, where
$c$ is a constant and $f$ is some computable function.
For a thorough discussion of the subject we refer to the books by
Downey and Fellows~\cite{Downey:2012vk}, Flum and
Grohe~\cite{Flum:2006vj} and Niedermeier~\cite{Niedermeier:2006ei}.

A parameterized problem $(Q, \kappa)$ is \emph{slicewise first-order definable} if
there is a computable function $f$ such that:
\begin{itemize}
\item for all $i \in \NAT$, we have that $f(i) = \phi_i \in $ FO$[\sigma]$;
\item a $\sigma$-structure $A$ with $\kappa(A) \leq i$ is in $Q$ if and only if $A \models \phi_i$.
\end{itemize}
Slicewise definability of problems in a logic was introduced by Flum and Grohe~\cite{Flum:2001kc}.

% Graphs
\paragraph{Graph theory.}
A \emph{graph} $G$ is a set of vertices $V(G)$ and a set of edges $E(G)
\subseteq V(G) \times V(G)$.
We assume that graphs are loop-free and undirected,
i.e.\ that $E$ is irreflexive and symmetric.
%If $E$ is not symmetric, we call $G$ a \emph{directed graph}. 
We
mostly follow the notation in Diestel~\cite{Diestel:2000vm}.  For a
set $S \subseteq V(G)$ of vertices, we write $G \setminus S$ to denote
the subgraph of $G$ induced by $V(G)\setminus S$.

Let $r \in \NAT$. An $r$-independent set in a graph $G$ is a set of
vertices of $G$ such that their pairwise distance is at least $r$.

% Graphs minors and nowhere dense
A graph $H$ is a \emph{minor} of a graph $G$, written $H \preceq G$, if
there is a map, called the \emph{minor map}, that takes each vertex
$v \in V(H)$ to a tree
$T_v$ that is a subgraph of $G$ such that for any $u \neq v$ the trees are disjoint, i.e. $T_v \cap T_u = \emptyset$,
and such such that for every edge $uv \in E(H)$ there are vertices $u' \in T_u, v' \in T_v$ such that $u'v' \in E(G)$.
A class of graphs $\C$ is \emph{minor-closed} if $H \preceq G \in \C$ implies $H \in \C$.

% TODO: reformat paragraph
The \emph{set of minimal excluded minors} $M(\C)$ is the set of graphs in the complement of $\C$ such
that for each $G \in M(\C)$ all proper minors of $G$ are in $\C$.
By the Robertson-Seymour Theorem~\cite{RS-GMXX} the set $M(\C)$ is finite for every minor-closed class $\C$.
It is a consequence of this theorem that membership in a minor-closed
class can be tested in $O(n^3)$ time.
% because we can test whether a graph is a minor of another graph in that time.

% nowhere dense
Let $r \in \NAT$. A minor $H$ of $G$ is a \emph{depth-$r$ minor of
$G$} if there is a minor map that takes vertices in $H$ to trees that have radius at  
most $r$.  
 A class of graphs $\C$ is \emph{nowhere dense} if for every $r \in
\NAT$ there is a graph $H_r$ such that for no $G \in \C$ we have $H_r
\preceq_r G$. A nowhere dense class of graphs $\C$ is called
\emph{effectively nowhere dense} if there is a computable function $f$
from integers to graphs such that for no $G \in \C$ and no $r$ we have
$f(r) \preceq_r G$. 
We are only interested in effectively nowhere dense classes so we
simply use the term \emph{nowhere dense} to mean effectively nowhere dense.

We say that a parameterized graph problem $(Q, \kappa)$ is \emph{slicewise nowhere dense} if there
is a computable function $h$ from pairs of integers to graphs such that for all $i \in \NAT$, we have for no
$G \in \{ H \in Q \mid \kappa(H) \leq i \}$ and $r$ that $h(i, r) \preceq_r G$. We will call $h$ the
\emph{parameter function} of $Q$.

For a class of graphs $\C$ we denote the closure of $\C$ under taking disjoint unions by $\overline\C$.
We say that a graph $G$ is an \emph{apex graph} over a class
$\C$ of graphs if there is a vertex $v \in V(G)$ such that the graph $G
\setminus \{v\} \in \C$. The class of
all apex graphs over $\C$ is denoted $\apex{\C}$.

% deletion distance, elimination distance
A graph $G$  has \emph{deletion distance $k$ to a class $\C$} if
there are $k$ vertices $v_1, \dots, v_k \in V(G)$ such that $G
\setminus \{v_1, \dots, v_k\} \in \C$.

The \emph{elimination distance} of a graph $G$  to a class $\C$
defined as follows:
{ \small
\[
\textstyle{\edd(G)} :=
\begin{cases}
  0,
  & \text{if }G \in \C; \\
  1 + \min \{\edd(G \setminus v) \mid v \in V(G)\},
  & \text{if $G \not\in \C$ and $G$ is connected;} \\
  \max\{\edd(H) \mid H \text{ a connected component of $G$}\},
  & \text{otherwise.}
\end{cases}
\]
}

\section{A general method for editing distances}
\label{S:nd}

In this section we establish a general technique for showing that certain definable parameterized problems on graphs are $\FPT$.  As an application, we show that certain natural distance measures to sparse graph classes are $\FPT$.   To be precise, we show that if a parameterized problem is both slicewise
first-order definable and slicewise nowhere dense, then it is $\FPT$.
In particular, this implies that if we have a class $\CC$ that is first-order definable and nowhere dense and the distance measure we are interested in is also first-order definable (that is to say, for each $k$ there is a formula that defines the graphs at distance $k$ from $\CC$), then the problem of determining the distance is $\FPT$.  More generally, if we have a parameterized problem $(Q,\kappa)$ that is slicewise nowhere dense and slicewise first-order definable, and a measure of distance to it is definable in the sense that for any values of $k$ and $d$, there is a first-order formula defining the graphs at distance $d$ to the class $\{G \mid G \in Q \mbox{ and } \kappa(G) \leq k\}$, then the problem of deciding whether a graph has distance at most $d$ to this class is $\FPT$ parameterized by $d+k$.  In particular, this yields the result of Golovach~\cite{Golovach:2014eq} as a consequence.

The method is an adaption of the main algorithm in Grohe\emph{ et
  al.}~\cite{Grohe:2014ec}.   Since the proof is essentially a modification of their central construction, rather than give a full account, we state the main results they prove and explain briefly how the proofs can be adapted for our purposes.  For a full proof, this section is  best read in conjunction with the paper~\cite{Grohe:2014ec}.   Section~\ref{subsec:nd} gives an overview of the key elements of the construction from~\cite{Grohe:2014ec} and the elements from it which we need to extract for our result.  Section~\ref{subsec:algorithm} then gives our main result and Section~\ref{subsec:applications} derives some consequences for distance measures. 

\subsection{Evaluating Formulas on Nowhere Dense Classes}\label{subsec:nd}

 The key result of \cite{Grohe:2014ec} is:

\begin{theorem}[Grohe\emph{ et al.}~\cite{Grohe:2014ec}, Theorem 1.1] \label{T:main_nd}
For every nowhere dense class $\C$ and every $\epsilon > 0$, every
property of graphs definable in first-order logic can be decided in
time $O(n^{1+\epsilon})$ on $\C$.
\end{theorem}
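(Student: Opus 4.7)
The plan is to prove the theorem by induction on the quantifier rank of the input formula $\phi$, combined with Gaifman's locality theorem. By Gaifman, $\phi$ is equivalent to a Boolean combination of \emph{basic local sentences} of the form
\[
\exists x_1 \dots \exists x_k \Bigl( \bigwedge_{i \neq j} \dist(x_i, x_j) > 2r \;\wedge\; \bigwedge_i \psi^{(r)}(x_i) \Bigr),
\]
where $\psi^{(r)}$ is $r$-local in its free variable. Thus it suffices to (a) decide efficiently, for each vertex $v$, whether a given $r$-local formula holds at $v$; and (b) detect an $r$-scattered set of such vertices of the required size.

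For (a), I would use \emph{sparse neighbourhood covers}: on a nowhere dense class $\C$, for every $r$ and every $\epsilon > 0$ one can compute, in time $O(n^{1+\epsilon})$, a collection $\cover$ of clusters in $V(G)$ of radius $O(r)$ such that every $r$-ball in $G$ lies inside some cluster, and each vertex belongs to only $O(n^\epsilon)$ clusters of small total size. Given $\cover$, deciding $\psi^{(r)}(v)$ reduces to brute-force first-order evaluation inside a single cluster containing the $r$-neighbourhood of $v$, which costs only constant time per vertex once $r$, $\psi$ and $\C$ are fixed. For (b), the existence of a large $r$-scattered set of vertices satisfying a given local property is controlled by \emph{uniform quasi-wideness}, an equivalent combinatorial characterisation of nowhere density, which lets one either find the scattered set greedily or conclude a bounded ``apex'' hitting set must intersect every such configuration.

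The induction step then handles one quantifier at a time: to evaluate $\exists y . \psi(x, y)$ of quantifier rank $q$, I would colour each vertex $v$ of $G$ by the set of formulas of quantifier rank $q-1$ (with one distinguished free variable) that hold when $y$ is instantiated to $v$; after this relabelling, $\exists y . \psi$ becomes a simpler problem on a coloured graph with the same underlying nowhere dense structure. The recursion depth, which otherwise blows up because the locality radius grows with each quantifier, is tamed by the \emph{splitter game} characterising nowhere dense classes: a game in which one player removes a small vertex set and the other picks a ball, with nowhere density guaranteeing the first player wins in a bounded number of rounds depending only on $\phi$ and $\C$. The hard part will be the bookkeeping needed to keep the running time at $O(n^{1+\epsilon})$ rather than merely polynomial: the sparse cover must be reused across colouring updates, and the multiplicative $n^\epsilon$ losses from recomputing covers of larger radius must not compound over the (constantly many) rounds dictated by the splitter game.
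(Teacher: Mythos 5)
Your overall strategy is the same locality-based one that Grohe et al.\ (and the sketch in Section~\ref{subsec:nd}) follow, but as written it has two genuine gaps. First, your step (a) claims that once an $r$-neighbourhood cover is available, an $r$-local formula can be decided at $v$ by brute force inside a cluster ``in constant time per vertex''. That is false on nowhere dense classes: clusters of a sparse neighbourhood cover are only bounded in \emph{radius} and in the \emph{degree} of the cover, not in size. The class of apex graphs is the standard counterexample -- a single vertex may have the entire graph as its $1$-neighbourhood, so the cluster containing that ball has $\Theta(n)$ vertices and brute-force evaluation there is not even polynomial in the relevant sense. The actual argument must recurse \emph{inside} the clusters: Splitter's winning strategy in the $(\ell,m,2r)$ game supplies a small set of bottleneck vertices to delete, their adjacency information is recorded by colours on the remaining vertices (the $G \star_{\cover}^{q+1} q$ construction), and the game-length bound, which is a constant depending only on $\C$, is what makes this recursion terminate. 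Your text mentions the splitter game, but only to bound a recursion on quantifier rank, not as the mechanism that replaces the (impossible) brute-force local evaluation.

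Second, your induction on quantifier rank via plain Gaifman locality does not go through as a well-founded induction: Gaifman's translation of a rank-$q$ formula into basic local sentences produces local formulas $\psi^{(r)}$ whose quantifier rank is in general \emph{larger} than $q$, so the measure you induct on does not decrease, and iterating the colouring step blows up both rank and radius. This is precisely the obstacle that forces Grohe et al.\ to prove the Rank-Preserving Locality Theorem (Theorem~\ref{T:rank_local}): by passing to the richer logic $\FO^+$, in which distance assertions are atomic, the localised formula $\hat\phi$ keeps the same quantifier rank $q$ as $\phi$, and the parameter driving the recursion (the pair of rank and game position) genuinely decreases. Without this rank preservation, or some substitute for it, your induction has no valid base-to-step progression. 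Your step (b) is in the right spirit -- the scattered-set subproblems are exactly the $(q+1,r)$-independence sentences, handled by the $\FPT$ algorithm for \textsc{Distance Independent Set} on nowhere dense classes -- but it only becomes applicable after the rank-preserving localisation and the cluster-internal recursion are in place.
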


We first give a sketch of the algorithm from
Theorem~\ref{T:main_nd} with an emphasis on the changes needed for our
purposes. We refer to \cite{Grohe:2014ec} for several definitions and results.

% \begin{definition}
% A family of properties $\{\P_i\}_{i \in \NAT}$ is fixed-parameter
% first-order definable if there is a family of first-order formulas
% $\{\phi_i\}_{i \in \NAT}$ such that:
% \begin{enumerate}
% \item A graph $G$ has property $\P_i$ if and only if $G \models
%   \phi_i$; and
% \item we can compute each $\phi_i$ given $i$; and
% \item there is a computable function $f : \NAT \to \NAT$ such that
%   $|\phi_i| \leq f(i)$ for all $i \in \NAT$.
% \end{enumerate}
% \end{definition}

The algorithm developed in the proof of Theorem~\ref{T:main_nd} uses a locality-based approach, similar to that used by Frick and Grohe~\cite{Frick:1999cd} to show that first-order evaluation is $\FPT$ on graphs of local bounded treewidth and developed in~\cite{DGK07} for application to graph classes with locally excluded minors .  The idea is that any first-order formula $\phi$ is, by Gaifman's theorem, equivalent to a Boolean combination of \emph{local}
formulae, that is formulae that assert the existence of neighbourhoods satisfying certain conditions.  In classes of sparse graphs where the size (or other parameter) of neighbourhoods of a given radius can be bounded, this yields an efficient evaluation algorithm.

In nowhere dense classes of graphs, we cannot in general bound the size of neighbourhoods.   For example, the class of apex graphs is nowhere dense, but a graph may contain a vertex whose neighbourhood is the whole graph.  However, nowhere dense classes are \emph{quasi-wide}~\cite{sparsity}, which means that we can remove a small (i.e.\ parameter-dependent) set of vertices  (the bottleneck) to ensure that there are many vertices that are far away from each other.  Grohe\emph{ et al.}~\cite{Grohe:2014ec} use this
approach to iteratively transform the input graph into a coloured graph where key bottleneck vertices are removed and vertices are coloured to keep relevant information.  At the same time the formula $\phi$ to be evaluated is also transformed so that it can be evaluated on the modified graph.  
This procedure terminates on nowhere dense classes of graphs within a
constant number of steps.

A key data structure used in the algorithm is a neighbourhood
cover, i.e.\ a collection of connected subgraphs, called clusters, so
that each neighbourhood of a vertex is contained in one of the
clusters. The radius of a cover is the maximum radius of any of its
clusters. The degree of a vertex in a cover is the number of clusters
the vertex is contained in. An important result from
\cite{Grohe:2014ec} is that graphs from a nowhere dense class allow
for small covers and that such a cover can be efficiently computed.

\begin{theorem}[Grohe\emph{ et al.}~\cite{Grohe:2014ec}, Theorem~6.2] \label{T:cover} 
Let $\C$ be a nowhere dense class of graphs. There is a function $f$ 
such that for all $r \in \NAT$ and $\epsilon > 0$ and all graphs $G  
\in \C$ with $n \geq f(r, \epsilon)$ vertices, there exists an  
$r$-neighbourhood cover of radius at most $2r$ and maximum degree at  
most $n^\epsilon$ and this cover can be computed in time $f(r,
\epsilon) \cdot n^{1+ \epsilon}$. Furthermore, if $\C$ is effectively  
nowhere dense, then $f$ is computable.  
\end{theorem}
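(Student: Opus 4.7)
The plan is to construct the cover from a carefully chosen linear ordering of $V(G)$ supplied by the sparsity theory of nowhere dense classes. The key tool is the weak $r$-colouring number $\mathrm{wcol}_r(G)$, defined as the minimum over all linear orderings $\pi$ of $V(G)$ of $\max_v |\mathrm{WReach}_r^\pi(v)|$, where $\mathrm{WReach}_r^\pi(v)$ is the set of $u \leq_\pi v$ reachable from $v$ by a path of length at most $r$ whose internal vertices are all $\pi$-larger than $u$. A theorem of Ne\v{s}et\v{r}il and Ossona de Mendez characterises nowhere dense classes as exactly those for which $\mathrm{wcol}_r(G) = n^{o(1)}$ for every fixed $r$; quantitatively, for every nowhere dense $\C$, every $r$ and every $\epsilon$, there is a constant $f_0(r,\epsilon)$ such that $\mathrm{wcol}_r(G) \leq f_0(r,\epsilon)\cdot n^\epsilon$ for all sufficiently large $G \in \C$, a witnessing ordering can be produced in time $f_0(r,\epsilon)\cdot n^{1+\epsilon}$ by a greedy generalised-colouring algorithm, and $f_0$ is computable whenever $\C$ is effectively nowhere dense.

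My first step is to apply this quantitative result at radius $2r$ to obtain a linear ordering $\pi$ on $V(G)$ with $|\mathrm{WReach}_{2r}^\pi(v)| \leq n^\epsilon$ for every $v \in V(G)$.

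Next I define the cover $\mathcal{X}$ by taking, for each vertex $u$, the cluster
\[
X_u := \{\, v \in V(G) : u \in \mathrm{WReach}_{2r}^\pi(v) \,\},
\]
and discarding empty clusters. Three properties must be checked. For the covering property, fix $v$ and let $u$ be the $\pi$-minimum vertex of $N_r(v)$; for any $w \in N_r(v)$ the concatenation of shortest paths $w \to v \to u$ has length at most $2r$ and its internal vertices all lie in $N_r(v)$, hence are $\pi$-larger than $u$ by minimality, so $u \in \mathrm{WReach}_{2r}^\pi(w)$ and $N_r(v) \subseteq X_u$. For the radius bound, if $v \in X_u$ then the witnessing path to $u$ has length at most $2r$ and, by the same minimality argument applied to its initial sub-paths, lies entirely in $X_u$; thus $X_u$ is connected and has radius at most $2r$ centred at $u$. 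For the degree bound, the number of clusters containing a fixed vertex $v$ equals $|\mathrm{WReach}_{2r}^\pi(v)| \leq n^\epsilon$.

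Finally, I would account for the running time: producing $\pi$ costs $f_0(2r,\epsilon)\cdot n^{1+\epsilon}$, and assembling the clusters can be done by a bounded-depth BFS from each vertex that only explores edges through $\pi$-larger vertices, with total work bounded by $\sum_v |\mathrm{WReach}_{2r}^\pi(v)| \leq n^{1+\epsilon}$; one then sets $f(r,\epsilon) := C\cdot f_0(2r,\epsilon)$ for a suitable absolute constant $C$, and effectiveness of $f$ in the effectively nowhere dense case is inherited from that of $f_0$. The principal obstacle is the first ingredient, namely the quantitative equivalence between nowhere density and subpolynomial weak colouring numbers together with the efficient construction of the witnessing ordering; the cover construction itself and the verification of its three properties are a direct combinatorial argument once the ordering is in hand.
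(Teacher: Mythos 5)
Your proposal is correct and follows essentially the same route as the source this theorem is quoted from (and as the paper's own remark indicates): obtain an ordering witnessing a subpolynomial weak $2r$-colouring number via Ne\v{s}et\v{r}il--Ossona de Mendez, define the clusters $X_u = \{v : u \in \mathrm{WReach}_{2r}^\pi(v)\}$, and read off the covering, radius and degree properties from the weak-reachability sets. The only loose ends (absorbing the constant $f_0(2r,\epsilon)$ into $n^\epsilon$ using $n \geq f(r,\epsilon)$, and accounting for edge-exploration in the BFS via the low edge density of nowhere dense graphs) are routine.
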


In this theorem, $f$ is a function of $r$ and $\epsilon$ and depends on the class $\C$ in the sense that it is determined, for an effectively nowhere dense $\C$ by its parameter function.  To be precise, the algorithm needs to order 
the vertices of $G$ to witness a weak colouring number of less than
$n^\epsilon$. The weak colouring number is an invariant of the graph
that is guaranteed to be low for graphs from a nowhere dense class.
The time bound $f(r, \epsilon)\cdot n^{1+\epsilon}$ is obtained using an algorithm for this from Nesetril and Ossona de Mendez~\cite{Nesetril:2008ev}.

While the algorithm of~\cite{Grohe:2014ec} assumes that the input graph $G$ comes from the class $\C$, we can say something more.  For a fixed nowhere dense class $\CC$, where we know the parameter function $h$, we can, given $G$, $r$ and $\epsilon$, compute a bound on the running time of the algorithm from Theorem~\ref{T:cover} .  By running the algorithm to this bound, we have the following as a direct consequence of the proof of Theorem~\ref{T:cover} .

\begin{lemma} \label{L:cover}
There is a computable function $f$ and an algorithm $A$ which given
any graph $G$ with $n$ vertices and for any  $r \in \NAT$ and $\epsilon > 0$ either
computes  an  
$r$-neighbourhood cover of radius at most $2r$ and maximum degree at  
most $n^\epsilon$ or determines that $G \not\in \C$.
\end{lemma}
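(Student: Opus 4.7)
The plan is to use the algorithm supplied by Theorem~\ref{T:cover} essentially verbatim, but to wrap it in a timeout derived from the known parameter function $h$ of $\C$, and to post-verify any output. Since $\C$ is effectively nowhere dense, $h$ is computable, and following the proof of Theorem~\ref{T:cover} one can therefore compute the bound $f(r,\epsilon)$ on both the running time constant and the threshold $n \ge f(r,\epsilon)$ explicitly from $h$, $r$ and $\epsilon$.

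The algorithm $A$ then proceeds as follows. On input $G$, $r$, $\epsilon$, first compute $f(r,\epsilon)$. If $n < f(r,\epsilon)$ there is nothing to do (the lemma quantifies over all $n$, but for such small $n$ either a trivial cover can be output or membership in $\C$ can be checked directly by brute force against the parameter function). Otherwise simulate the Grohe et al.\ cover algorithm from Theorem~\ref{T:cover} for at most $f(r,\epsilon)\cdot n^{1+\epsilon}$ steps. If the simulation has not halted with a purported cover $\mathcal{X}$ by that point, halt $A$ and output ``$G \notin \C$''. If the simulation does halt with output $\mathcal{X}$, verify $\mathcal{X}$: compute, by BFS from each vertex, the $r$-neighbourhood $N_r(v)$ and check that $N_r(v)$ is contained in some cluster; check that every cluster induces a connected subgraph of radius at most $2r$; and check that every vertex lies in at most $n^{\epsilon}$ clusters. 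These are all doable in time polynomial in $n$ and the description length of $\mathcal{X}$. If verification succeeds, output $\mathcal{X}$; otherwise output ``$G \notin \C$''.

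For correctness, the only way $A$ can err is by incorrectly declaring ``$G \notin \C$'' when in fact $G \in \C$. But Theorem~\ref{T:cover} guarantees that whenever $G \in \C$ and $n \ge f(r,\epsilon)$ the cover algorithm halts within the prescribed time bound and returns a valid $r$-neighbourhood cover of radius at most $2r$ and maximum degree at most $n^\epsilon$, which will pass verification. Hence no false ``$G \notin \C$'' verdict is ever issued. If $G \notin \C$, $A$ is allowed to output either a valid cover (if the algorithm happens to succeed and verification passes) or the negative verdict.

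The main obstacle is simply isolating where effectivity enters the original proof: the Grohe et al.\ algorithm needs a vertex ordering witnessing a bounded weak colouring number, and the bound, as well as the Nešetřil--Ossona de Mendez procedure producing the ordering, depend on $\C$ only through its parameter function. Once one notes that this is the only place where the assumption $G \in \C$ is used in a non-verifiable way, the timeout-plus-verify wrapper described above suffices, and no new combinatorial argument beyond Theorem~\ref{T:cover} is needed.
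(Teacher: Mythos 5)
Your proposal matches the paper's argument: the paper likewise notes that, since $\C$ is effectively nowhere dense, the bound $f(r,\epsilon)$ from Theorem~\ref{T:cover} is computable from the parameter function, and obtains the lemma by running the cover algorithm up to that precomputed bound and rejecting (``$G \notin \C$'') on failure. Your explicit post-verification of the returned cover is a small, sensible addition that the paper leaves implicit, and the $n < f(r,\epsilon)$ edge case you gloss over is glossed over by the paper as well, so this is essentially the same proof.
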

 
% The cover is computed as follows. Let $n_0$ be the smallest integer
% that is larger than $f(r, \delta)$ in 
% Theorem~\ref{T:cover}, that satisfies $n_0^{\delta/2} > 2$ and such that
% every graph in $\C$ with at least $n$ vertices has at most
% $n_0^{1+\delta}$ edges. The size of $n_0$ only depends on $\phi$ and the
% nowhere dense properties of $\C$.
%%%%%%%%%

%If $|G| \leq n_0$, we can just evaluate $\phi$ using brute force. 
%Otherwise, if $|G| > n_0$, we can use \cite[Theorem~6.2]{Grohe:2014ec} to compute an 
%$r$-neighbourhood cover $\cover$ of $G$. To do this, we need to order
%the vertices of $G$ to witness a weak colouring number of less than
%$n^\epsilon$. This can be achieved by first computing an $r$-transitive
%fraternal augmentation of $G$, which can be done in time $f(r, \epsilon)
%\cdot n^{1+\epsilon}$ using an algorithm from Nesetril
%and Ossona de Mendez~\cite{Nesetril:2008ev}.

At the core of the proof of  Theorem~\ref{T:main_nd} is the Rank-Preserving Locality Theorem.
Given a neighbourhood cover $\cover$ in a graph $G$, the algorithm iteratively removes bottleneck vertices and 
adds colours to the neighbourhoods of removed vertices to obtain a
coloured graph denoted $G \star_{\cover}^{q+1} q$.  Here $q$ is an integer parameter obtained from the first-order formula $\phi$ that we wish to evaluate in $G$.  At the same time, $\phi$ is transformed into a formula $\hat \phi$ that is \emph{(a)} in the expanded signature of the $G \star_{\cover}^{q+1} q$; and \emph{(b)} in a logic $\FO^+$ which enriches $\FO$ by allowing us to assert distances between vertices without the need for quantifiers.  This ensures that the \emph{local} sentence $\hat \phi$ has the same quantifier rank as $\phi$, giving us the Rank-Preserving Locality Theorem below.  In the following statement, a \emph{$(q+1, r)$-independence sentence} is a formula asserting the existence of a distance-$r$-independent set of size $q+1$ of a particular colour.

\begin{theorem}[Rank-Preserving Locality Theorem, Grohe\emph{ et
    al.}~\cite{Grohe:2014ec}, Theorem 7.5] \label{T:rank_local}
For every $q \in \NAT$ there is an $r$ such that for every $\FO$-formula 
$\phi(x)$ of quantifier rank $q$ there is an $\FO^+$-formula $\hat \phi(x)$, which is a Boolean combination  of $(q+1, r)$-independence sentences and atomic formulas, such that for any graph $G$ every $r$-neighbourhood cover $\cover$
of $G$, and every $v \in V(G)$,
\[
G \models \phi(v) \iff G \star^{q+1}_{\cover} q \models \hat \phi(v). 
\]
Furthermore, $\hat \phi$ is computable from $\phi$, and $r$ is computable from $q$.
\end{theorem}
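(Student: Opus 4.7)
The plan is to derive the rank-preserving statement from Gaifman's locality theorem, using the iterated $\star$ construction to absorb the Gaifman witnesses into colours rather than into nested quantifiers, so that the radius $r$ is extracted from Gaifman (depending only on $q$) and the transformed $\hat\phi$ stays at a bounded complexity expressible in $\FO^+$.

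First I would invoke Gaifman's theorem: the formula $\phi(x)$ of quantifier rank $q$ is logically equivalent to a Boolean combination of $r$-local formulas $\psi^{(r)}(x)$ at the free variable and of basic local sentences of the form $\exists y_1 \dots \exists y_k\bigl(\bigwedge_{i<j}\dist(y_i,y_j) > 2r \wedge \bigwedge_i \chi^{(r)}(y_i)\bigr)$, where $k \le q$ and $r$ is bounded by a computable function of $q$ (say $7^q$). This determines the radius $r$ promised in the theorem and already isolates the two constituents of $\hat\phi$, namely independence sentences and local atomic checks.

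Next I would proceed by induction on $q$, with atomic formulas as the trivial base case. For the inductive step I would argue that after one application of the $\star$ construction, every $r$-local formula can be rewritten as a Boolean combination of atomic formulas in the expanded signature and of strictly smaller-rank subformulas: the "bottleneck" vertices removed in that step no longer need to be quantified over, because their incidence with each cluster of $\cover$ has been recorded as a new unary colour, and the remaining part of any $r$-ball around $v$ is fully contained in one cluster of $\cover$ by the covering property. Iterating this $q+1$ times then eliminates every quantifier that falls inside a local $r$-ball, leaving only quantifier-free $\FO^+$ checks within each ball. The Gaifman basic local sentences become $(q+1,r)$-independence sentences directly: the $k\le q$ Gaifman witnesses together with the free variable $v$ form an independent set of size at most $q+1$, and the local conditions $\chi^{(r)}(y_i)$ have been made quantifier-free by the same reduction.

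The crux --- and the main obstacle --- is the equivalence $G \models \phi(v) \iff G \star^{q+1}_\cover q \models \hat\phi(v)$. I would establish this via a back-and-forth / Ehrenfeucht--Fra\"iss\'e argument: any rank-$q$ distinction the Spoiler can draw in $G$ by probing bottleneck vertices can be matched in $G \star^{q+1}_\cover q$ by inspecting the corresponding unary colour, while probes at surviving vertices are answered directly in the reduced graph. The crucial quantitative point is that each of the $q$ quantifier alternations in $\phi$ may introduce one new bottleneck vertex, so $q+1$ iterations of the removal are exactly enough to cover every possible Spoiler strategy; this is what forces the superscript $q+1$ in the construction and ties the locality radius from Gaifman to the depth of $\star$. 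Finally, since Gaifman's normal form is effective and the $\star$-construction together with the rewriting in each inductive step is computable, both $\hat\phi$ from $\phi$ and $r$ from $q$ are obtained constructively, as required.
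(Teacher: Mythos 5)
The paper itself gives no proof of this statement: it is imported verbatim from Grohe et al.\ (Theorem~7.5 of \cite{Grohe:2014ec}), and Section~\ref{subsec:nd} only sketches how the construction $G \star^{q+1}_{\cover} q$ is used. So your proposal has to be measured against the original argument in \cite{Grohe:2014ec}, which emphatically does \emph{not} go through the classical Gaifman normal form --- and this is exactly where your sketch has a genuine gap. Classical Gaifman locality does not preserve quantifier rank: the $r$-local formulas and basic local sentences it produces in general have quantifier rank well above $q$ (even expressing $\dist(x,y)\le r$ in plain $\FO$ already costs roughly $\log r$ quantifiers), and the radius and witness bounds are tied to that larger rank. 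The whole reason Grohe et al.\ prove a new \emph{rank-preserving} locality theorem, in the enriched logic $\FO^+$ with distance atoms and relative to the coloured structure $G \star^{q+1}_{\cover} q$, is that the algorithm's recursion (the number of removal/recolouring rounds, bounded via the Splitter game) only terminates if the rank does not grow at the locality step; starting from the classical Gaifman decomposition forfeits precisely the property the theorem is named for.

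Your inductive step --- ``after one application of the $\star$ construction, every $r$-local formula can be rewritten as a Boolean combination of atomic formulas in the expanded signature and of strictly smaller-rank subformulas'' --- is not a reduction but a restatement of the technical core of the theorem, and you support it only with a gestured back-and-forth argument. It never specifies what the new colours actually encode (in \cite{Grohe:2014ec} they record local rank-$q$ types with respect to the removed vertices, and correctness is an induction on how these types transfer through the cover, valid for \emph{every} $r$-neighbourhood cover $\cover$, a quantification your argument does not engage with). Your reading of the superscript $q+1$ as ``one bottleneck vertex per quantifier alternation, so $q+1$ removals suffice'' is likewise speculative; in the source the parameter $q+1$ is the size of the independent sets in the $(q+1,r)$-independence sentences, and the number of rounds of the construction is governed by the Splitter game, not by the alternation count. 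In short, the skeleton (locality plus absorbing removed vertices into colours, with $\FO^+$ distance atoms avoiding extra quantifiers) points in the right direction, but the step that would make it a proof is missing, and the specific plan of bootstrapping from classical Gaifman fails on the rank-preservation requirement.
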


An important tool for constructing $G \star_{\cover}^{q+1} q$ is a game
characterisation of nowhere dense classes. The game has
three parameters: $\ell, m, r$. In the
$(\ell, m, r)$ Splitter game two players Connector and Splitter play against each
other. In each round Connector chooses a vertex $u$, and Splitter has to
respond with a set $A$ of vertices of size at most $m$ in the $r$-neighbourhood of $u$. In the next round the graph is the neighbourhood of $u$
with the vertices from $A$ removed.  If the graph is empty, Splitter wins. If
Connector survives for more than $\ell$ rounds, she wins.
Grohe\emph{ et al.}~\cite{Grohe:2014ec} prove~\cite[Theorem~4.2]{Grohe:2014ec} that if $\C$ is a nowhere dense class, then there
are $\ell, m$ such that Splitter has a winning strategy on the $(\ell,
m, 2r)$ Splitter game on every graph in $\C$.  

The Splitter's strategy on a graph $G$ 
(which can be efficiently computed) is the essential tool in the construction of $G
\star_{\cover}^{q+1} q$.
The inductive procedure used to compute $G
\star_{\cover}^{q+1} q$ from $G$  is outlined in \cite[Proof of Theorem
8.1]{Grohe:2014ec}.   We note that the termination of the 
algorithm depends on the length of the game -- which is bounded by a
constant since $\C$ is nowhere dense.  The strategy to compute
Splitter's moves is described in
\cite[Remark~4.3]{Grohe:2014ec}.  Since the run time of the algorithm
to compute  $G \star_{\cover}^{q+1} q$ only depends on $q$ and the
length of the Splitter game and we can compute this in advance, we can once again extract the fact that if we start with an arbitrary graph $G$, we can efficiently \emph{either} transform it into $G \star_{\cover}^{q+1} q$ \emph{or} determine that it is not in the class $\C$.  This is summed up in the following lemma. 

\begin{lemma} \label{L:structure}
Let $\C$ be a nowhere dense class of graphs. There is an algorithm
that runs in time $O(q)$ which given a graph $G$
returns $G \star_{\cover}^{q+1} q$ or determines that $G \not\in \C$.
\end{lemma}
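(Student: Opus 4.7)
The plan is to take the inductive construction of $G \star_{\cover}^{q+1} q$ from the proof of Theorem~8.1 in \cite{Grohe:2014ec}, precompute all the parameters that in the original argument depend on $\C$ being known in advance, and then run the construction on an arbitrary input graph with time-outs that trigger the ``$G \not\in \C$'' verdict whenever a required structural witness fails to appear. Concretely, from the parameter function $h$ of $\C$ together with $q$, I would first precompute the constants $\ell, m, r$ guaranteed by \cite[Theorem~4.2]{Grohe:2014ec} and by Theorem~\ref{T:rank_local}: on every graph in $\C$, Splitter wins the $(\ell, m, 2r)$ game in at most $\ell$ rounds, and the recursion defining $G \star_{\cover}^{q+1} q$ only has to unfold to depth $\ell$. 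I would then invoke Lemma~\ref{L:cover} with parameters $r$ and some chosen $\epsilon$, which either returns a suitable $r$-neighbourhood cover $\cover$ or already certifies $G \not\in \C$.

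With $\cover$ in hand, I would run the inductive procedure outlined in \cite[Proof of Theorem~8.1]{Grohe:2014ec}, computing Splitter's responses via the strategy described in \cite[Remark~4.3]{Grohe:2014ec}. If at any round the strategy fails to produce a valid set of at most $m$ vertices inside the $r$-neighbourhood of Connector's current vertex, or if the recursion threatens to exceed $\ell$ rounds, I halt and return ``$G \not\in \C$''; otherwise the construction terminates and the algorithm outputs $G \star_{\cover}^{q+1} q$. Correctness is essentially free from the original proofs: if $G \in \C$, Theorem~8.1 of \cite{Grohe:2014ec} guarantees that every stage succeeds within the precomputed bounds and the desired structure is produced; conversely, any rejection is caused by the failure of a condition that provably holds throughout $\C$, so $G \not\in \C$.

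The main obstacle is verifying that Splitter's winning strategy, viewed as an actual algorithm, is ``self-certifying'' in the appropriate sense: it must be a deterministic procedure that, in time bounded by the precomputed parameters, either produces a valid response or visibly fails in a way that is incompatible with $G \in \C$. This amounts to inspecting the strategy in \cite[Remark~4.3]{Grohe:2014ec}, which is built on local witnesses such as short paths, shallow topological minors, and weak colourings whose size is controlled by the nowhere-dense parameter function. As long as each witness search is carried out only up to the precomputed bound derived from $h$, a failure to find the witness is itself a proof that $G$ cannot belong to $\C$, and the time bound claimed in the lemma then follows directly from the analysis in \cite{Grohe:2014ec}, with all $\C$-dependent quantities now appearing as precomputed constants rather than as unknowns.
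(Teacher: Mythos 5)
Your proposal follows essentially the same route as the paper: the lemma is justified there by exactly this argument, namely that the inductive construction of $G \star_{\cover}^{q+1} q$ from \cite[Proof of Theorem~8.1]{Grohe:2014ec}, with Splitter's moves computed as in \cite[Remark~4.3]{Grohe:2014ec}, has its round count and running time bounded in advance by quantities computable from $q$ and the (effective) parameter function of $\C$, so one runs it on an arbitrary graph and rejects — correctly concluding $G \not\in \C$ — whenever a precomputed bound is exceeded or a required witness fails to appear. Your treatment is a faithful, somewhat more explicit rendering of the paper's sketch, including the use of Lemma~\ref{L:cover} for the cover and the self-certifying reading of Splitter's strategy.
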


Theorem~\ref{T:rank_local} reduces the problem of evaluating a formula of first-order logic to deciding a series of distance-$r$-independent set problems.  So, the final ingredient is to show that this is tractable.  Formally, the problem is defined as follows:

\begin{mdframed}
 \textsc{Distance Independent Set}\\
 \textbf{Input: } A graph $G$ and $k, r \in \NAT$.\\
\textbf{Parameter:} $k+r$\\
 \textbf{Problem: } Does $G$ contain an $r$-independent set of size $k$?
\end{mdframed}

The problem is shown to be $\FPT$ on nowhere dense classes of graphs
\cite[Theorem~5.1]{Grohe:2014ec}.  Since the runtime of the algorithm
depends on the length of the Splitter game and Splitter's strategy, and this can be bounded in advance,
\cite[Theorem~5.1]{Grohe:2014ec} can be restated as follows:

\begin{lemma} \label{L:indset}
Let $\C$ be a nowhere dense class of graphs. Then there is an
algorithm $A$ and a computable function $f$ such that for every $\epsilon > 0$
$A$ runs in time $f(\epsilon, r, k)$ and either solves the
\textsc{Distance Independent Set} problem or determines that $G \not\in \C$.
\end{lemma}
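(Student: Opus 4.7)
The plan is to adapt the algorithm of Grohe et al.~\cite[Theorem~5.1]{Grohe:2014ec}, which proves the same independence-set statement under the assumption $G \in \C$, into one that either returns a correct answer or certifies $G \notin \C$. The guiding observation, already used in the proofs of Lemmas~\ref{L:cover} and~\ref{L:structure}, is that the original algorithm is purely structural: the only places where membership of $G$ in $\C$ is used are in (i) computing small-radius, low-degree neighbourhood covers of subgraphs of $G$ and (ii) invoking Splitter's winning strategy on the Splitter game. Both of these are available in exactly the required robust form through the two preceding lemmas.

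First I would unfold the algorithm of~\cite[Theorem~5.1]{Grohe:2014ec} and identify its $\C$-dependent subroutines. That algorithm performs a bounded-depth recursion driven by Splitter's strategy: at each level it uses a neighbourhood cover to localise the search for a distance-$r$-independent set of size $k$ and then recurses on smaller instances produced by removing Splitter's response. Every call to a neighbourhood cover I would replace by a call to the algorithm of Lemma~\ref{L:cover}, aborting with ``$G \notin \C$'' on failure; every appeal to Splitter's strategy or to the coloured structure $G \star_{\cover}^{q+1} q$ I would replace by a call to Lemma~\ref{L:structure}, again aborting on failure. Since $\C$ is effectively nowhere dense, its parameter function together with the proof of~\cite[Theorem~4.2]{Grohe:2014ec} gives us, from $r$, $k$ and $\epsilon$, explicit a priori upper bounds on the Splitter-game length, the weak colouring numbers that have to be witnessed, and the cumulative time budget of the original algorithm; the execution is cut off the moment any of these bounds is exceeded, and ``$G \notin \C$'' is returned.

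The function $f(\epsilon, r, k)$ is then obtained by plugging these explicit bounds into the runtime analysis of~\cite[Theorem~5.1]{Grohe:2014ec}, using that the function of Lemma~\ref{L:cover} and the $O(q)$ bound of Lemma~\ref{L:structure} are themselves computable from the parameter function of $\C$. Correctness has two directions. If $G \in \C$, every subroutine succeeds within its precomputed budget and the original analysis of~\cite[Theorem~5.1]{Grohe:2014ec} guarantees a correct verdict on the existence of an $r$-independent set of size $k$. If, on the other hand, any subroutine fails or a budget is exceeded, then by the soundness of Lemmas~\ref{L:cover} and~\ref{L:structure}, together with the fact that any graph in $\C$ would satisfy all the precomputed bounds, $G$ cannot lie in $\C$.

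The main obstacle I expect is bookkeeping rather than a genuinely new idea: making sure that every subroutine used inside the proof of~\cite[Theorem~5.1]{Grohe:2014ec} is either already covered by Lemmas~\ref{L:cover} and~\ref{L:structure} or admits an analogous ``abort and certify $G \notin \C$'' variant. In particular I would have to inspect carefully how the recursive step passes the coloured graph $G \star_{\cover}^{q+1} q$ into the next level of the recursion, and verify that the invariants required for the fallback (bounded Splitter-game length, bounded weak colouring number, bounded cover degree) are preserved under these transformations, so that a failure at any depth of the recursion is still a sound witness that the original input $G$ was not in $\C$.
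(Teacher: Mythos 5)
Your proposal is correct and follows essentially the same route as the paper: the paper also obtains Lemma~\ref{L:indset} by observing that the only dependence of the algorithm of~\cite[Theorem~5.1]{Grohe:2014ec} on $\C$ is through quantities (Splitter-game length, Splitter's strategy, and the resulting runtime) that can be bounded in advance from the parameter function of the effectively nowhere dense class, so one runs the algorithm within this precomputed budget and reports $G \notin \C$ on failure. Your extra bookkeeping via Lemmas~\ref{L:cover} and~\ref{L:structure} is just a more explicit rendering of the same argument.
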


This is all we need to evaluate $\hat\phi$ on $G \star_{\cover}^{q+1}
q$, which is equivalent to evaluating $\phi$ on $G$ by Theorem~\ref{T:rank_local}.

\subsection{Deciding Definable nowhere dense Problems}\label{subsec:algorithm}

The main result of \cite{Grohe:2014ec} establishes that checking whether $G\models \phi$ is $\FPT$ when parameterized by $\phi$ provided that $G$ comes from a known nowhere dense class $\C$.  Thus, the formula is arbitrary, but the graphs come from a restricted class.  In Section~\ref{subsec:nd} above we give an account of this proof from which we can extract the observation that the algorithm can be modified to work for an arbitrary input graph $G$ with the requirement that the algorithm \emph{may} simply reject the input if $G$ is not in $\CC$.  This suggests a tractable way of deciding $G\models \phi$ provided that $\phi$ defines a nowhere dense class.  Now the graphs is arbitrary, but the formula comes from a restricted class.  We formalise the result in the following theorem:

\begin{theorem} \label{T:fpfond_fpt}
Let $(Q, \kappa)$ be a problem that is slicewise first-order definable
and slicewise nowhere dense. Then $(Q, \kappa)$ is fixed-parameter tractable.
\end{theorem}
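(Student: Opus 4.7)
The plan is to invoke the modified Grohe \emph{et al.}\ algorithm summarised in Section~\ref{subsec:nd} as a subroutine, feeding it the formula supplied by the slicewise first-order definition and using the slice itself as the relevant nowhere dense class. Concretely, given an input structure $A$, I would first compute $k := \kappa(A)$; then, using the computable function witnessing slicewise first-order definability, compute $\phi_k$, which has the property that any $B$ with $\kappa(B) \leq k$ satisfies $B \in Q$ iff $B \models \phi_k$. Finally, I would set $\C_k := \{ H \in Q \mid \kappa(H) \leq k \}$, which by the slicewise nowhere dense hypothesis is effectively nowhere dense with parameter function $h(k, \cdot)$, itself computable from $k$; this makes the whole apparatus of Section~\ref{subsec:nd} available for the class $\C_k$.

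I would then run the modified procedure on $A$ with formula $\phi_k$ and class $\C_k$: Lemma~\ref{L:cover} either produces a suitable neighbourhood cover $\cover$ of $A$ or rejects $A$; Lemma~\ref{L:structure} either constructs the coloured graph $G \star_{\cover}^{q+1} q$, where $q$ is the quantifier rank of $\phi_k$, or rejects; Theorem~\ref{T:rank_local} transforms $\phi_k$ into the rank-preserving local formula $\hat \phi_k$; and Lemma~\ref{L:indset} evaluates $\hat \phi_k$ on $G \star_{\cover}^{q+1} q$ via its reduction to \textsc{Distance Independent Set} instances. At every step, rejection certifies that $A \notin \C_k$, and since $\kappa(A) = k$ this means $A \notin Q$. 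If no step rejects, the final evaluation decides whether $A \models \phi_k$, which, by the choice of $\phi_k$, is equivalent to $A \in Q$. Each step runs either in time $O(n^{1+\epsilon})$ or in time bounded by $f(k) \cdot n^{O(1)}$ for a computable $f$ depending only on $\C_k$ and $\phi_k$, both of which are determined by $k$; the total time therefore has the form $g(k) \cdot n^{1+\epsilon}$ for some computable $g$, establishing $\FPT$.

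The hard part will not be a new technical difficulty but rather the verification that each ingredient extracted from~\cite{Grohe:2014ec} works for inputs that are not \emph{a priori} known to belong to the relevant nowhere dense class. This is precisely what Lemmas~\ref{L:cover}, \ref{L:structure} and~\ref{L:indset} are engineered to guarantee, by allowing the algorithm either to output the correct answer or to reject an input that is not in $\C_k$. With these tools in place, the proof reduces to threading them together, with the single additional observation that intermediate rejections yield valid negative answers to the overall decision problem.
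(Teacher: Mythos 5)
Your proposal is correct and follows essentially the same route as the paper's own proof: compute $\phi_k$ and the parameter function for the slice $\C_k$ from $k=\kappa(A)$, then thread together Lemma~\ref{L:cover}, Lemma~\ref{L:structure}, Theorem~\ref{T:rank_local} and Lemma~\ref{L:indset}, treating any rejection as a certificate that $A\notin\C_k$ and hence $A\notin Q$. The observation that the extracted lemmas work on arbitrary inputs by either answering or rejecting is exactly the point the paper's proof relies on.
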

\begin{proof}

In the following, for ease of exposition, we assume that an instance of the problem consists of a graph $G$  and $\kappa(G) = i$ for some positive integer $i$.

\begin{description}
\item[Step 1: Compute $\phi$ and the parameters function.]
Since $(Q, \kappa)$ is slicewise first-order definable, we can compute from $i$ a first-order formula $\phi$ which defines the class of graphs $C_i = \{H \mid H \in Q \mbox{ and } \kappa(H) \leq i\}$.  Moreover, since $(Q, \kappa)$ is slicewise nowhere dense, we can compute from $i$ an algorithm that computes the parameter function $h$ for $C_i$.

\item[Step 2: Obtain $\hat \phi$ from $\phi$.]
By the Rank-Preserving Locality Theorem
(Theorem~\ref{T:rank_local}), we can compute from $\phi$ the formula $\hat \phi$ and a radius $r$.

\item[Step 3: Find a small cover $\cover$ for $G$.]
By Lemma~\ref{L:cover}, we can either find a cover
$\cover$ for $G$, or reject if the algorithm determine that $G \not\in C_i$.

\item[Step 4: Simulate Splitter game to compute
  $G \star^{q+1}_{\cover} q$.]
By  Lemma~\ref{L:structure} we obtain $G
\star^{q+1}_{\cover} q$ or reject if the algorithm determines that $G
\not\in C_i$.

\item[Step 5: Evaluate $\hat \phi$ on $G \star^{q+1}_{\cover} q$.]
Finally we evaluate $\hat \phi$ on $G \star_{\cover}^{q+1}
q$.
To do this, we need to solve the distance independent
set problem. We can do this by 
Lemma~\ref{L:indset}.
% succeeds after at most $\ell$
%moves. Again, we can reject if the algorithm needs more than $\ell$
%rounds of the game because this shows that $G \not\in \C$.

Since evaluating $\hat \phi$ on $G \star_{\cover}^{q+1} q$ is
equivalent to evaluating $\phi$ on $G$ this allows us to decide
whether $G \in Q$.
\end{description}
\end{proof}

\subsection{Applications}\label{subsec:applications}

In this Section we discuss some applications of Theorem~\ref{T:fpfond_fpt} that demonstrate its power.  We begin by considering simple edit distances.

\paragraph{Edit Distances}
A graph $G$ has \emph{deletion distance} $k$ to a class $\C$ if there exists a set $S$ of $k$ vertices in $G$ so that $G\setminus S\in \C$.   Suppose $(Q,\kappa)$ is  a parameterized graph problem.  We define the problem of deletion distance to $Q$ as follows:

\begin{mdframed}
 \textsc{Deletion Distance to $Q$}\\
 \textbf{Input: } A graph $G$ and $k, d \in \NAT$.\\
\textbf{Parameter:} $k+d$\\
 \textbf{Problem: } Does $G$ contain a set $S$ of  $k$ vertices so that $\kappa(G\setminus S) \leq d$ and $G \setminus S \in Q$?
\end{mdframed}

In many of the examples below, we define formulas of first-order logic by \emph{relativisation}.   For convenience, we define the notion here.

\begin{definition}
Let $\phi$ and $\psi(x)$ be first-order formulas, where $\psi$ has a distinguished free variable $x$ . The
relativisation of $\phi$ by $\psi(x)$, denoted $\phi^{[x.\psi]}$ is the formula obtained from $\phi$
by replacing all subformulas of the form
$\exists v \, \phi'$ in $\phi$ by $\exists v  (\psi[v/x] \land \phi' )$,
and all subformulas of the form $\forall v \, \phi'$ in $\phi$ by
$\forall v (\psi[v/x] \to \phi')$.  Here $\psi[v/x]$ denotes the result of replacing the free occurrences of $x$ in $\psi$ with $v$ in a suitable way avoiding capture.
\end{definition}
The key idea here is that if $\phi^{[x.\psi]}$ is true in a graph $G$
if $\phi$ is true in the subgraph of $G$ induced by the vertices that
satisfy $\psi(x)$.  Note that the variable $x$ that is free is $\psi$
is bound in $\phi^{[x.\psi]}$.  Other variables that appear free in
$\psi$ remain free in $\phi^{[x.\psi]}$.  We stress this as it is
needed in Proposition~\ref{P:tree-depth} where nested relativisations
are used.

\begin{proposition}\label{P:deletion}
If $(Q,\kappa)$ is slicewise nowhere dense and slicewise first-order definable then 
\textsc{Deletion Distance to $Q$} is $\FPT$.
\end{proposition}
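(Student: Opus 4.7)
My plan is to cast \textsc{Deletion Distance to $Q$} as a parameterized graph problem $(Q^*, \kappa^*)$ --- encoding the triple $(G, k, d)$ as a coloured graph (for instance by adjoining $k$ isolated vertices of one colour and $d$ of another to $G$) --- to which Theorem~\ref{T:fpfond_fpt} applies. For each parameter value $i$, the $i$-th slice of $(Q^*, \kappa^*)$ consists of all such encodings with $k + d \leq i$ such that $G$ has deletion distance at most $k$ to $C_d := \{H \in Q : \kappa(H) \leq d\}$. Since there are only finitely many pairs $(k, d)$ with $k + d \leq i$, I can reduce both slicewise first-order definability and slicewise nowhere density of $(Q^*, \kappa^*)$ to the corresponding properties for each individual pair.

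For slicewise first-order definability, fix such a pair $(k, d)$. Slicewise first-order definability of $(Q, \kappa)$ yields a computable formula $\phi_d$ defining $C_d$, and then
\[
\psi_{k,d} \;:=\; \exists v_1 \cdots \exists v_k\, \phi_d^{[x.\, x \neq v_1 \land \cdots \land x \neq v_k]}
\]
holds in $G$ exactly when some $k$ vertices can be deleted from $G$ so that the remainder lies in $C_d$. Combining this with straightforward first-order assertions about the sizes of the colour classes handles the encoding, and the formula for the $i$-th slice is the finite disjunction of the pieces corresponding to each $(k,d)$ with $k+d\leq i$, uniformly computable from $i$.

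The main technical step is slicewise nowhere density, which I would derive from the following closure property: if $\C$ is effectively nowhere dense with parameter function $h$, then the class $\C^{+k}$ of graphs obtained by adding up to $k$ new vertices with arbitrary incident edges to members of $\C$ is also effectively nowhere dense, with a parameter function computable from $h$ and $k$. For this it suffices to exclude a sufficiently large complete graph as a depth-$r$ minor: if $G \in \C^{+k}$ had $K_t \preceq_r G$ via branch sets $T_1, \ldots, T_t$ and $A$ denoted the set of at most $k$ added vertices, then at most $k$ branch sets could meet $A$; the remaining $t - k$ branch sets would lie entirely in $G \setminus A \in \C$, with their radii unchanged, and any edge of $G$ between two of them would also lie in $G \setminus A$, exhibiting $K_{t-k} \preceq_r G \setminus A$. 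Taking $t = |V(h(r))| + k$ and noting that $K_{|V(h(r))|}$ contains $h(r)$ as a subgraph then produces a forbidden depth-$r$ minor for $\C^{+k}$. Applying this to each $C_d$ with $d \leq i$ (the isolated coloured vertices of the encoding obviously do not affect nowhere density) and taking the maximum over the finitely many resulting forbidden minors yields a computable parameter function for the $i$-th slice of $(Q^*, \kappa^*)$, and Theorem~\ref{T:fpfond_fpt} then delivers the conclusion. The only non-routine ingredient is the closure lemma above; once it is identified, its proof is short.
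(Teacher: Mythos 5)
Your proposal is correct and follows essentially the same route as the paper: it verifies that \textsc{Deletion Distance to $Q$} is slicewise first-order definable via the relativised formula $\exists v_1\cdots\exists v_k\,\phi_d^{[x.\,x\neq v_1\wedge\cdots\wedge x\neq v_k]}$ and slicewise nowhere dense via the observation that at most $k$ branch sets of a depth-$r$ clique minor can meet the deleted set, so excluding $K_{m+k}$ (where $m$ bounds the forbidden minor $h(d,r)$ of the slice) suffices, and then applies Theorem~\ref{T:fpfond_fpt}. The coloured-vertex encoding of $(k,d)$ is extra bookkeeping the paper glosses over, and is routine (one should also relativise $\phi_d$ away from the padding vertices), so it does not change the substance of the argument.
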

\begin{proof}
  It suffices to show that \textsc{Deletion Distance to $Q$} is also slicewise nowhere dense and slicewise first-order definable.  For the latter, note that if $\phi_i$ is the first-order formula that defines the class of graphs $\C_i = \{G \mid \kappa(G) \leq i \mbox{ and } G \in Q\}$, then the class of graphs at deletion distance $k$ to $\C_i$ is given by:
\[
\exists w_1, \dots, w_{k} \phi_i^{[x.\theta_k]}
\]
where $\theta_k(x)$ is the formula $\bigwedge_{1\leq i\leq k} x \neq
w_i $.

To see that \textsc{Deletion Distance to $Q$} is also slicewise
nowhere dense, let $h$ be the parameter function for $Q$.  If the
graph $h(i,r)$ has $m$ vertices, then $K_m$ is not a
depth-$r$-minor of any graph in $\C_i$.  Then a graph which has
deletion distance $k$ to $\C_i$ cannot have $K_{m+k}$ as a depth-$r$-minor.
Indeed, suppose $K_{m+k} \preceq_r G$ and $G\setminus S \in \C_i$ where $S$ is a set of $k$ vertices.  Vertices from $S$
can appear in the images of at most $k$ vertices from  $K_{m+k}$ under
the minor map.  Thus, this minor map also witnesses that $K_m \preceq_r
G\setminus S$, a contradiction.
\end{proof}

% We can for example express that a vertex has no more than $d$
% different neighbours using the following formula:
% \[
% \phi_d(v) := \lnot \exists v_1, \dots, v_{d+1}
%   \left(\left(\bigwedge_{i< j} v_i \neq v_j\right) \land
%           \left(\bigwedge_{i} E(v, v_i) \right)\right).
% \]
% So a graph $G$ has degree bounded by $d$ if and only if $G \models
% \forall v \phi_d(v)$. Let $\C_d$ be the class of graphs with degree
% bounded by $d$. Then the following formula captures deletion distance
% $k$ to degree $d$:
% \begin{align*}
% \exists w_1, \dots, w_{k} \forall v 
%   &\left(\bigwedge_i v \neq w_i \right) \to \\
%   &\left(
%     \lnot \exists v_1, \dots, v_{d+1} \left(
%     \left(\bigwedge_{i, j} v_i \neq w_j\right) \land
%       \left(\bigwedge_{i< j} v_i \neq v_j\right) \land
%       \left(\bigwedge_{i} E(v, v_i) \right)\right)
% \right).
% \end{align*}

% Thus the technique introduced in this section is sufficient to show
% that deletion distance $k$ to degree $d$ is fixed-parameter tractable
% parameterized by $k+d$.

Instead of deleting vertices, we can also consider editing the
graph by adding or deleting edges.    It is easily seen that we can modify a first-order formula $\phi$ to define the class of graphs $G$ that can be made to satisfy $\phi$ by $k$ edge additions or deletions.  
An added or deleted edge is
identified by a pair of vertices $uv \in E(G)$. If we want to add the
edge $uv$ we can replace all occurrences of  $E(w_1, w_2)$ in $\phi$ by:
\[
(w_1 = u \land w_2 = v) \lor (w_1 = v \land w_2 = u) \lor  E(w_1, w_2).
\]
Similarly we delete an edge $uv$ by replacing all occurrences of
$E(w_1, w_2)$ in $\phi$ by:
\[
(w_1 \neq u \lor w_2 \neq v) \land (w_1 \neq v \lor w_2 \neq u) \land E(w_1, w_2).
\]
Thus, an analogue of Proposition~\ref{P:deletion} is obtained for any combination of vertex and edge deletions and additions.
Golovach~\cite{Golovach:2014eq} proved that that editing a graph to
degree $d$ using at most $k$ edge additions/deletions is $\FPT$
parameterized by $k+d$.   Since the class of graphs of degree $d$ is first-order definable and nowhere dense for any $d$, the  result also follows directly from Theorem~\ref{T:fpfond_fpt}.

\paragraph{Tree-depth.}

Tree-depth is a graph parameter that lies 
between the widely studied parameters vertex cover number and tree
width.  It has interesting connections to nowhere dense graph
classes.  It is usually defined as follows:

\begin{definition}
The \emph{tree-depth} of a graph $G$, written $\td(G)$, is defined as
follows:
\[
\td(G) := \begin{cases}
 0,
 & \text{if }V(G) = \emptyset; \\
 1 + \min\{\td(G \setminus v) \mid v \in V(G)\},
 & \text{if $G$ is connected;} \\
 \max\{\td(H) \mid H \text{ a component of $G$}\},
 & \text{otherwise.}
\end{cases}
\]
\end{definition}

Note that a graph has tree-depth $k$
if and only if it has elimination distance $k$ to the class of empty graphs.  So one can
think of elimination distance as a natural generalisation of tree-depth.

It is known that the problem of determining the tree-depth of graph is $\FPT$, with tree-depth as the parameter (see~\cite[Theorem~7.2]{sparsity}).
We now give an alternative proof of this, using Theorem~\ref{T:fpfond_fpt}.  It is clear that for any $k$, the class of graphs of tree-depth at most $k$ is nowhere dense.
We show below that it is also first-order definable.

\begin{proposition}\label{P:tree-depth}
For each $k \in \NAT$ there is a first-order formula $\phi_k$ such
that a graph $G$ has tree-depth $k$ if and only if $G \models
\phi_k$.
\end{proposition}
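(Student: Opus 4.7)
The plan is to construct, by induction on $k$, formulas $\phi_{\leq k}$ with the property that $G \models \phi_{\leq k}$ iff $\td(G) \leq k$, and then to set $\phi_k := \phi_{\leq k} \land \lnot \phi_{\leq k-1}$ so that $G \models \phi_k$ iff $\td(G) = k$.

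The key enabling observation will be a diameter bound: if $\td(G) \leq k$, then every connected component of $G$ has diameter at most $D_k := 2^k - 2$.  I would establish this by a short induction on $k$ using the recursion defining $\td$: if $v$ is a vertex whose removal from a connected graph $G$ leaves a graph of tree-depth at most $k-1$, then any two remaining vertices either lie in a common component of $G \setminus v$ (and so are at distance at most $D_{k-1}$), or are connected via $v$ by a path of length at most $2(D_{k-1}+1)$, yielding the recurrence $D_k \leq 2 D_{k-1} + 2$ with $D_1 = 0$.  For fixed $r$, the predicate $\dist(x,y) \leq r$ is first-order definable by a formula $\delta_r(x,y)$ that existentially quantifies an intermediate path of length at most $r$.

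The inductive construction will then directly encode the recursive clause defining $\td$.  I set $\phi_{\leq 0} := \lnot \exists x\,(x=x)$ and, for $k \geq 1$,
\[
\phi_{\leq k} \;:=\; \forall x\,\Bigl(\mathrm{Diam}_{D_k}(x) \;\land\; \exists v\,\bigl(\delta_{D_k}(x,v) \land [\phi_{\leq k-1}]^{[y.\,\delta_{D_k}(x,y) \land y \neq v]}\bigr)\Bigr),
\]
where $\mathrm{Diam}_D(x) := \forall y\,(\delta_{D+1}(x,y) \to \delta_D(x,y))$ expresses that the connected component of $x$ has diameter at most $D$.  Intuitively, the $\mathrm{Diam}$ conjunct together with the diameter bound forces the ball $B_{D_k}(x) := \{y : \dist(x,y) \leq D_k\}$ to coincide with the connected component of $x$; the existential $v$ then guesses a vertex in that component whose removal yields a graph of tree-depth at most $k-1$, which is verified by relativising $\phi_{\leq k-1}$ to $B_{D_k}(x) \setminus \{v\}$.

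Correctness can be checked by induction on $k$.  In the forward direction, if $\td(G) \leq k$ then the diameter bound yields $\mathrm{Diam}_{D_k}(x)$ for every $x$, the recursive definition of $\td$ supplies the required $v$, and the induction hypothesis validates the relativised subformula.  Conversely, if $\phi_{\leq k}$ holds, then for every $x$ the set $B_{D_k}(x)$ equals the component of $x$, and the existential $v$ together with the induction hypothesis shows that this component has tree-depth at most $k$; taking the maximum over components then gives $\td(G) \leq k$.  The main subtlety I would need to check is the interaction between relativisation and the distance formulas $\delta_r$ hidden inside $\phi_{\leq k-1}$: because $\delta_r$ is defined by existentially quantifying intermediate vertices of a path, relativising it correctly restricts those vertices to the chosen subset and so captures distances in the induced subgraph---exactly the notion of distance needed for the recursive step.
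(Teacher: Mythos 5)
Your proposal is correct and follows essentially the same route as the paper's proof: an exponential bound on the diameter of components of graphs of tree-depth at most $k$, first-order distance formulas to identify each component with a ball around a vertex, and relativisation of the inductively defined formula to that ball with one vertex deleted. The differences are cosmetic---you fold the paper's two-stage construction ($\theta_k$ for connected graphs, then $\phi_{k+1}$) into a single formula and use the sharper constant $2^k-2$ in place of the paper's $2^{k+1}$.
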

\begin{proof}

We use the fact that in a graph of tree-depth less than $k$,
there are no paths of length greater than $2^k$.   This allows us, in
the inductive definition of tree-depth above, to replace the condition
of connectedness (which is not first-order definable) with a
first-order definable condition on vertices at distance at most $2^k$.

Let $\dist_d(u,v)$ denote the first-order formula with free variables
$u$ and $v$ that is satisfied by a pair of vertices in a graph
$G$ if, and only if, they have distance at most $d$ in $G$.  Note that
the formula  $\dist_d^{[x.x \neq w]}(u,v)$ is then a formula with three
free variables $u,v,w$ which defines those $u,v$ which have a path of
length $d$ in the graph obtained by deleting the vertex $w$.

We can now define the formula $\phi_k$ by induction. Only the empty graph
has tree-depth $0$, so $\phi_0 := \lnot\exists v(v = v)$.

Suppose that $\phi_k$ defines the graphs of tree-depth at most $k$,
let 
\[
\theta_{k} := (\forall u,v \, \dist_{2^{k+1}}(u,v)) \land \exists w (\phi_k^{[x.x \neq w]}).
\]
The formula $\theta_k$ defines the connected graphs of tree depth at
most $k+1$.  Indeed, the first conjunct ensures that the graph is
connected as no pair of vertices has distance greater than $2^{k+1}$
and the second conjunct ensures we can find a vertex $w$ whose removal yields a graph of
tree-depth at most $k$.

We can now define the formula $\phi_{k+1}$ as follows.
\[
\phi_{k+1} :=  (\forall u,v \, \dist_{2^{k+1}+1}(u,v) \to \dist_{2^{k+1}}(u,v)) \land
\forall w \theta_k^{[x.\dist_{2^{k+1}}(w,x)]}.
\]
The formula asserts that there are no pairs of vertices whose distance
is strictly greater than $2^{k+1}$ and that for every vertex $w$, the
formula $\theta_k$ holds in its connected component, namely those
vertices which are at distance at most  $2^{k+1}$ from $w$.
\end{proof}

While the proof of Proposition~\ref{P:deletion} shows that deletion distance to any slicewise first-order definable class is also slicewise first-order definable, Proposition~\ref{P:tree-depth} shows that elimination distance to the particular class of empty graphs is slicewise first-order definable.  It does not establish this more generally for elimination distance to any slicewise nowhere dense class.

\section{Elimination distance to classes characterised by excluded minors}
\label{S:minor}

In this section we show that determining the elimination distance of a
graph to a minor-closed class $\CC$ is $\FPT$ when 
parameterized by the elimination distance.  More generally, we
formulate the following parameterized problem where the forbidden
minors of $\CC$ are also part of the parameter.

\begin{mdframed}
 \textsc{Elimination Distance to Excluded Minors}\\
 \textbf{Input: } A graph $G$, a natural number $k \in \NAT$ and a set of graphs $M$\\
\textbf{Parameter:} $k+\sum_{G \in M}|G|$\\
 \textbf{Problem: } Does $G$ have elimination distance $k$ to the
 class $\C$ characterised by $M(\C) = M$?
\end{mdframed}

It is not difficult to show that the class of graphs which have
elimination distance $k$ to a minor-closed class $\C$ is also a
minor-closed class.  Indeed, this can be seen directly from an
alternative characterisation of elimination distance that we establish
below.  
The characterisation is in terms of the iterated closure of $\C$ under
the operation of disjoint unions and taking the class of apex graphs.
We introduce a piece of notation for this in the next definition.
Recall that we write $\apex{\C}$ for the class of all the apex graphs over $\C$, and that we write $\overline\C$ for the closure of $\C$ under disjoint unions.

\begin{definition}
For a  class of graphs $\C$, let $\C_0 := \C$, and $\C_{i+1} := \overline{\apex{\C_i}}$.
\end{definition}

We show next that the class $\C_k$ is exactly the class of graphs at
elimination distance $k$ from $\C$.

\begin{proposition} \label{P:char}
Let $\C$ be a class of graphs and $k \geq 0$. Then $\C_k$ is the class of all graphs with elimination distance at most $k$ to $\C$.
\end{proposition}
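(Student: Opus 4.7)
The plan is induction on $k$. The base case $k = 0$ is immediate from the definitions, once one has verified the small auxiliary fact that $\C_0 = \C$ and the class of graphs with $\edd(G) = 0$ coincide (this uses that the third clause of the $\edd$ definition reduces to the first for graphs all of whose components lie in $\C$, which in the settings of interest---e.g.\ minor-closed $\C$ closed under disjoint union---is the same as membership in $\C$).

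For the inductive step, fix $k$ and assume $G \in \C_k$ iff $\edd(G) \leq k$. For the inclusion $\C_{k+1} \subseteq \{G : \edd(G) \leq k+1\}$, I would decompose $G \in \C_{k+1} = \overline{\apex{\C_k}}$ as a disjoint union $G = A_1 \sqcup \cdots \sqcup A_p$ with each $A_i \in \apex{\C_k}$, witnessed by a vertex $v_i$ with $A_i \setminus v_i \in \C_k$. A connected component $K$ of $G$ sits inside some $A_i$. If $v_i \notin K$ then $K$ is already a component of $A_i \setminus v_i \in \C_k$, and the induction hypothesis gives $\edd(K) \leq k$. If $v_i \in K$ then $K \setminus v_i$ is a disjoint union of components of $A_i \setminus v_i$; by an auxiliary closure lemma (see below) this union still lies in $\C_k$, so by induction $\edd(K \setminus v_i) \leq k$, whence $\edd(K) \leq k+1$ via the connected clause of $\edd$. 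Taking the maximum over components of $G$ yields $\edd(G) \leq k+1$.

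For the reverse inclusion, assume $\edd(G) \leq k+1$ and recurse on the structure of the $\edd$ definition. If $G \in \C$ then $G \in \C_0 \subseteq \C_{k+1}$, using monotonicity $\C_j \subseteq \C_{j+1}$ that one establishes in parallel. If $G$ is connected and not in $\C$, pick $v$ with $\edd(G \setminus v) \leq k$; the induction hypothesis puts $G \setminus v \in \C_k$, so $G \in \apex{\C_k} \subseteq \C_{k+1}$. If $G$ is disconnected, each component $K$ satisfies $\edd(K) \leq k+1$; applying the two preceding cases to each $K$ places it in $\C_{k+1}$, and since $\C_{k+1}$ is closed under disjoint unions by construction, $G \in \C_{k+1}$.

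The main obstacle is the bookkeeping around the disconnected clause of $\edd$, in particular proving the two auxiliary facts used above: first, that each $\C_k$ is closed under taking connected components (and hence under disjoint unions of any sub-collection of the components of its members), and second, that $\C_j \subseteq \C_{j+1}$. Both follow by a parallel induction on $k$ from the recursive construction $\C_{k+1} = \overline{\apex{\C_k}}$, using that $\C$ is closed under vertex deletion---automatic whenever $\C$ is minor-closed, which is the context of the following section. With these two lemmas in hand, both directions of the inductive step reduce to the case analysis above.
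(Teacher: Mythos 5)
Your proof is correct and follows essentially the same induction as the paper's: the forward direction decomposes a graph in $\C_{k+1}$ as a disjoint union of apex graphs over $\C_k$ and removes the apex vertices, while the reverse direction recurses on the definition of $\edd$, handling the connected and disconnected clauses separately. The auxiliary facts you isolate (closure of $\C_k$ under components, monotonicity $\C_j \subseteq \C_{j+1}$, and the disjoint-union subtlety at $k=0$) are precisely the points the paper's brief proof glosses over; they do rely on $\C$ being closed under vertex deletion and disjoint unions, which the proposition's statement omits but which is needed anyway and holds in the paper's intended setting of minor-closed classes in Section~\ref{S:minor}.
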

\begin{proof}
We prove this by induction. Only the graphs in $\C$ have elimination distance $0$ to $\C$, so the statement holds for $k = 0$.

Suppose the statement holds for $k$. If $G \in \C_{k+1}$, then $G$ is
a disjoint union of graphs $G_1, \dots, G_s$ from $\apex{\C_{k}}$, so
we can remove at most one vertex from each of the $G_i$ and obtain a
graph in $\C_{k}$.  Thus the elimination distance of $G$ to $\C_{k}$
is $1$, and by induction the elimination distance to $\C$ is
$k+1$. Conversely, if $G$ has elimination distance $k+1$ to $\C$, then
we can remove a vertex from each component of $G$ to obtain a graph
$G'$ with elimination distance $k$ to $\C$.  Using the induction hypothesis each component of $G'$ is in $\C_{k}$, and thus $G \in \C_{k+1}$.
\end{proof}

It is easy to see that if $\C$ is a minor-closed class of graphs
then so is $\C_k$ for any $k$.  Indeed, it is well-known that
$\apex{C}$ is minor-closed for any minor-closed $\C$, so we just need
to note that $\overline{\C}$ is also minor-closed.  But it is clear
that if $H$ is a minor of a graph $G$ that is the disjoint union of
graphs $G_1,\ldots,G_s$, then $H$ itself is the disjoint union of
(possibly empty) minors of  $G_1,\ldots,G_s$.  Thus, the class of graphs of elimination
distance at most $k$ to a minor-closed class $\C$ is itself
minor-closed.  We next show that we can construct the set of its
minimal excluded minors from the corresponding set for $\C$.

To obtain $M(\C_k)$, we need to iteratively compute $M(\apex{\C})$ and $M(\overline\C)$ from $M(\C)$.
Adler et al.~\cite{Adler:2008wm} show that from the set of minimal excluded minors $M(\C)$ of a class $\C$, we can compute
$M(\apex{\C})$:
%, and thus the minimal set of excluded minors for the class of graphs with deletion
%distance $k$ to $\C$ for any fixed value $k$.

\begin{theorem}[\cite{Adler:2008wm}, Theorem 5.1] \label{T:apex}
There is a computable function that takes the set of graphs
$M(\C)$ characterising a minor-closed class $\C$ to the set
$M(\apex{\C})$. 
\end{theorem}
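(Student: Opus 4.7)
\emph{Proof plan.} The plan is to reduce the problem to computing a size bound on members of $M(\apex{\C})$. First, $\apex{\C}$ is minor-closed: if $H \preceq G$ via a minor map $T$ and $v \in V(G)$ satisfies $G \setminus \{v\} \in \C$, then either $v$ lies in some branch set $T_u$, in which case restricting $T$ to $V(H) \setminus \{u\}$ witnesses $H \setminus \{u\} \preceq G \setminus \{v\} \in \C$ so that $u$ is an apex vertex of $H$, or else $T$ already witnesses $H \preceq G \setminus \{v\} \in \C \subseteq \apex{\C}$. Hence $M(\apex{\C})$ is finite by Robertson--Seymour. Given a computable upper bound $N$ on $|V(G)|$ for $G \in M(\apex{\C})$ in terms of $M(\C)$, the computation then proceeds by enumerating all graphs of size at most $N$, testing each for membership in $\apex{\C}$ (for each vertex $v$, decide $G \setminus \{v\} \in \C$ by minor-testing against every graph in $M(\C)$ in cubic time), and retaining those graphs that are not in $\apex{\C}$ but whose every single-edge deletion, edge contraction, and isolated-vertex removal does lie in $\apex{\C}$.

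The main obstacle is obtaining such a bound $N$. The combinatorial idea I would try first goes as follows. Let $G \in M(\apex{\C})$ and set $m = \max_{H \in M(\C)} |V(H)|$. For each $v \in V(G)$ there exist $H_v \in M(\C)$ and a minor map $\mu_v$ certifying $H_v \preceq G \setminus \{v\}$. Minimality of $G$ means that contracting or deleting any edge of $G$ yields a graph in $\apex{\C}$, which should sharply restrict how much slack the branch sets used by the $\mu_v$ can have. The target is to show that, up to a bounded additive term, $V(G)$ is covered by the union of branch sets across the $\mu_v$, giving a bound $|V(G)| \leq g(m, |M(\C)|)$ for a computable $g$. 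One would rule out redundant vertices $w$ by arguing that if $w$ is not used essentially by any $\mu_v$, then a suitable edge contraction or deletion incident to $w$ still produces a graph outside $\apex{\C}$, contradicting minimality.

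The step I expect to be most delicate is making the notion of ``essential use'' precise and squeezing a quantitative bound out of the minimality constraints, since the minor maps $\mu_v$ for different $v$ interact in complicated ways and a naive union-bound argument may not give a computable $g$. A less elementary fallback is to appeal to the Robertson--Seymour structure theorem applied to $\apex{\C}$ itself: its structural parameters depend computably on $M(\C)$, and analysing how a minimal obstruction can sit inside such a decomposition should yield the bound. Either way, once $N$ is in hand the enumerate-and-verify procedure from the first paragraph outputs $M(\apex{\C})$, completing the computation.
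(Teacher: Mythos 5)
This theorem is not proved in the paper at all: it is quoted verbatim from Adler, Grohe and Kreutzer and used as a black box, so the only fair comparison is between your sketch and what that cited result actually requires. Your first paragraph is fine but is the easy part: minor-closedness of $\apex{\C}$, finiteness of $M(\apex{\C})$ by Robertson--Seymour, and the enumerate-and-verify procedure once a computable size bound $N$ is available are all routine (including the observation that it suffices to check one-step deletions/contractions, since $\apex{\C}$ is minor-closed). The entire content of the theorem is the computable bound $N$ on the order of the minimal excluded minors of $\apex{\C}$ in terms of $M(\C)$, and that is exactly where your proposal has a genuine gap.

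Concretely, the covering argument you outline cannot work as stated: showing that $V(G)$ is covered, up to a bounded additive term, by branch sets of minor models of graphs in $M(\C)$ gives no bound on $|V(G)|$, because branch sets are trees of unbounded size; minimality of $G$ in $M(\apex{\C})$ does not by itself prevent long paths or large subdivided pieces sitting inside branch sets, and arguing that such vertices are ``inessential'' and can be contracted away without leaving the complement of $\apex{\C}$ is precisely the hard step (it is the kind of ``irrelevant vertex'' argument that in Adler--Grohe--Kreutzer is carried out with heavy Graph Minors machinery, e.g.\ linkage/structure-theorem arguments, not with a local union bound). Your fallback --- ``apply the Robertson--Seymour structure theorem to $\apex{\C}$, whose parameters depend computably on $M(\C)$'' --- is a restatement of the difficulty rather than a solution: extracting \emph{computable} obstruction-size bounds from that theory is nontrivial (the well-quasi-ordering proof is non-constructive, and obstruction sets for operations on minor-closed classes are not in general known to be computable from membership information alone), and doing it for the apex operation is the published theorem you are trying to prove. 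So the proposal reduces the statement correctly to the size bound but does not establish it; as it stands it is a plan, not a proof, and the missing bound is not a detail one can wave through.
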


 We next aim to  show that from $M(\C)$ we can also compute
compute $M(\overline{\C})$.  Together with Theorem~\ref{T:apex} this implies that from $M(\C)$ we
can compute $M(\C_k)$, the set of minimal excluded minors for the class of graphs with elimination
distance $k$ to $\C$.

% We extend the result making use of the following observation. Given a
% minor-closed class $\C$, we obtain all the graphs with elimination distance $d$ by
% repeatedly i) obtaining the closure under disjoint unions and ii)
% obtaining all the apex graphs over that class. The key is that the
% class obtained by iterating this construction $k$ times is MSO
% definable, and that there is only a small increase in the treewidth of
% the excluded minors.

We begin by characterising minor-closed classes that are closed under
disjoint unions in terms of the connectedness of their excluded minors.

\begin{lemma}
Let $\C$ be a class of graphs closed under taking minors. Then $\C$ is
closed under taking disjoint unions iff each graph in $M(\C)$ is
connected.
\end{lemma}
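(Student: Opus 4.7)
The plan is to prove the two directions separately, using the definition of $M(\C)$ and the basic fact that a minor of a disjoint union decomposes according to the components.

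For the forward direction, I would argue by contradiction. Assume $\C$ is closed under disjoint unions and suppose some $H \in M(\C)$ is disconnected, so $H = H_1 \sqcup H_2$ with both $H_1, H_2$ nonempty. Since each $H_i$ is a proper subgraph of $H$ (and hence a proper minor), minimality of $H$ in $M(\C)$ gives $H_1, H_2 \in \C$. Closure of $\C$ under disjoint unions then yields $H = H_1 \sqcup H_2 \in \C$, contradicting $H \in M(\C)$.

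For the reverse direction, again I would argue by contradiction. Assume every graph in $M(\C)$ is connected, and take $G_1, G_2 \in \C$. If $G_1 \sqcup G_2 \notin \C$, then since $\C$ is minor-closed, some $H \in M(\C)$ satisfies $H \preceq G_1 \sqcup G_2$. Fix a witnessing minor map $v \mapsto T_v$. Each $T_v$ is a connected subgraph of $G_1 \sqcup G_2$, so it lies entirely inside either $G_1$ or $G_2$. Moreover, for any edge $uv \in E(H)$, the required edge between $T_u$ and $T_v$ must lie in one of $G_1, G_2$, so $T_u$ and $T_v$ belong to the same side. Since $H$ is connected by hypothesis, all trees $T_v$ lie on the same side, say inside $G_1$. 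But then the same minor map witnesses $H \preceq G_1$, so $H \in \C$ by minor-closure, contradicting $H \in M(\C)$.

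The only subtlety is the last step of the reverse direction, namely extracting that the images of the minor map all lie on a single side from the connectedness of $H$; this is where connectedness of $H$ is used essentially, and it is exactly the condition forced on $M(\C)$ by the forward direction. The argument also tacitly uses that $V(H)$ is nonempty, which is fine since the empty graph always lies in any minor-closed class containing at least one graph, so it is never in $M(\C)$.
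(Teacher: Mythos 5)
Your proof is correct and follows essentially the same route as the paper: the forward direction uses minimality of a disconnected $H \in M(\C)$ to place its parts in $\C$ and derive a contradiction with closure under disjoint unions, and the reverse direction uses the fact that a connected excluded minor of a disjoint union must already be a minor of one of the summands (you merely spell out the minor-map argument that the paper leaves implicit, and treat binary unions, which suffices by iteration).
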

\begin{proof}
Let $\C$ be a minor-closed class of graphs, and let $M(\C)
= \{H_1, \dots, H_s\}$ be its set of minimal excluded minors.

Suppose each of the graphs in $M(\C)$ is connected. Let $H \in M(\C)$ and
let $G = G_1 \oplus \dots \oplus G_r$ be the disjoint union of graphs
$G_1, \dots, G_r \in \C$. Because $H$ is connected, we have that $H
\preceq G$ if and  only if $H \preceq G_i$ for one $1 \leq i \leq
r$. So, since all the $G_i \in \C$, we have $H \not\preceq G$ and thus
$G \in \C$. This shows that $\C$ is closed under taking disjoint
unions.

Conversely assume one of the graphs $H \in M(\C)$ is not
connected and let $A_1, \dots, A_t$ be its components . Then $A_1, \dots, A_t \in \C$, since each $A_i$ is a proper
minor of $H$, and $H$ is minor-minimal in the complement of $\C$. However, $A_1 \oplus \dots \oplus A_t = H \not\in
\C$.
\end{proof}

\begin{definition} \label{D:cc}
For a graph $G$ with connected components $G_1, \dots, G_r$, let $\H$
denote the set of \emph{connected} graphs $H$ with $V(H) = V(G)$ and
such that the subgraph of $H$ induced by $V(G_i)$ is exactly $G_i$.  
We define the \emph{connection closure} of $G$ to be 
the set of all minimal (under the subgraph relation) graphs in $\H$.
The connection closure of a set of graphs is the union of the connection closures of the graphs in the set.
\end{definition}
Note that if $G$ has $e$ edges and $m$ components, then any graph in
the connection closure of $G$ has exactly $e+m-1$ edges.  This is
because it has $G$ as a subgraph and in addition $m-1$ edges
corresponding to a tree on $m$ vertices connecting the $m$
components.

\begin{lemma} \label{L:disjoint_union}
Let $\C$ be a minor-closed class of graphs. Then $M(\overline\C)$ is
the set of minor-minimal graphs in the connection closure of $M(\C)$.
\end{lemma}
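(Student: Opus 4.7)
The plan is to prove the set equality $M(\overline{\C}) = \mathcal{N}$, where $\mathcal{N}$ denotes the set of minor-minimal graphs in the connection closure of $M(\C)$, by checking that both sets are the minor-minimal excluded elements of the same minor-closed class. Concretely, I will show that
\[
\overline{\C} = \{ G \mid N \not\preceq G \text{ for every } N \in \mathcal{N} \}.
\]
Since $\mathcal{N}$ is an antichain in the minor order (any two distinct minor-minimal elements of a common set are incomparable), and since a minor-closed class determines its antichain of minimal excluded minors uniquely, this set equality immediately yields $M(\overline{\C}) = \mathcal{N}$.

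The forward inclusion is straightforward. If $G \in \overline{\C}$ had some $N \in \mathcal{N}$ as a minor, then since $N$ is connected it would actually be a minor of some component $K$ of $G$, and $K \in \C$ by the definition of $\overline{\C}$; but $N$ is in the connection closure of some $H \in M(\C)$ and therefore contains $H$ as a subgraph, giving $H \preceq N \preceq K$ and contradicting $H \notin \C$.

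The reverse inclusion is the heart of the argument. Given $G \notin \overline{\C}$, we need to exhibit some $N \in \mathcal{N}$ as a minor of $G$. Choose a component $K$ of $G$ with $K \notin \C$ and let $H \in M(\C)$ be a minor of $K$, witnessed by disjoint subtrees $\{T_v\}_{v \in V(H)}$ of $K$. Using the connectedness of $K$, extend the subtrees greedily by adding each previously unused vertex to some $T_v$ to which it is adjacent, until the $T_v$ partition $V(K)$ while remaining trees and preserving the minor-embedding of $H$. Contracting each $T_v$ yields a connected graph on vertex set $V(H)$ that contains $H$ as a subgraph. First delete every intra-component edge not in $H$, and then prune the remaining inter-component edges down to a spanning subset; this produces a connected graph $K^*$ with $V(K^*) = V(H)$ and $K^*[V(H_i)] = H_i$ for each component $H_i$ of $H$, minimal under taking subgraphs with this property, and hence an element of the connection closure of $H$ in the sense of Definition~\ref{D:cc}. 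Since $K^* \preceq K \preceq G$ and the minor order is well-founded on finite graphs, there is some $N \in \mathcal{N}$ with $N \preceq K^* \preceq G$, as required.

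The main obstacle will be the construction of $K^*$: one must verify that extending the subtrees to cover $V(K)$ does not destroy the minor-embedding of $H$, that the two edge-pruning steps produce exactly a graph in the connection closure of $H$, and that the resulting graph genuinely remains a minor of $K$. The remaining bookkeeping (the antichain property of $\mathcal{N}$ and the uniqueness of minimal excluded minors) is then routine.
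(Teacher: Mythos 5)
Your proposal is correct, and its overall skeleton matches the paper's: characterise $\overline\C$ as the class of graphs excluding the connection closure of $M(\C)$ (connectedness of the excluded graphs gives one inclusion; constructing a connection-closure minor inside a bad component gives the other), and then recover $M(\overline\C)$ by the routine antichain argument, which you make explicit and the paper leaves implicit. Where you genuinely differ is in the key construction. The paper takes a minimal subtree $T$ of the bad component containing all branch sets $T_{w_i}$ of the $H$-minor and puts an edge $w_iw_j$ whenever $T$ has a path between $T_{w_i}$ and $T_{w_j}$ avoiding the other branch sets, claiming the resulting graph is itself in the connection closure (exactly $e+m-1$ edges) and a minor of the component. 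You instead extend the branch sets to a partition of the component, contract, delete the intra-component chords, prune the inter-component edges to a minimal connected set, and finally descend to a minor-minimal element by well-foundedness. Your route is more robust: in the paper's construction the connecting paths inside $T$ may share internal vertices (three singleton branch sets at the leaves of a star already yield a triangle), so the graph defined there can have more than $e+m-1$ edges, fail minimality, and even fail to be a minor of the component; your explicit contraction-and-pruning, together with the final descent to a minor-minimal element, sidesteps all of this. The points you flag as remaining work are indeed the ones to spell out, and they all go through: greedy absorption of unused vertices into adjacent branch sets preserves the minor model; deleting intra-component chords keeps connectivity because each part still induces the connected graph $H_i$ and the inter-component edges are untouched; and after pruning, any proper subgraph on $V(H)$ with the required induced components is disconnected, so $K^*$ really lies in the connection closure of $H$ and is a minor of $G$.
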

\begin{proof}
Let $\C$ be a minor-closed class of graphs, with $M(\C)$
%= \{H_1, \dots, H_s\}$
its set of minimal excluded minors, and let $\hat{M}$ be the connection closure of $M(\C)$. 

Let $G$ be a graph such that $\hat H \not\preceq G$ for all $\hat H
\in \hat{M}$.  Suppose for a contradiction that $G$ is not a disjoint
union of graphs from $\C$.  
 Then there is a component $G'$ of $G$ that is not in $\C$ and
 therefore there is a graph $H \in M(\C)$ such that $H \preceq G'$.
We show that one of the graphs in the connection closure of $H$ is a minor of $G'$.

%There is a graph $H \in M(\C)$ such that  $\hat H$ is in the connection
%closure of $H'$.

Let $\{w_1, \dots, w_s\}$ be the vertex set of $H$ and consider the image $T_1, \dots, T_s$
of the minor map from $H$ to $G'$. Let $T$ be a minimal subtree of $G'$ that contains
all of the $T_i$.  Such a tree must exist since $G'$ is connected.
Let $\hat H$ be the graph with the same vertex set as $H$, and an edge
between two vertices $w_i, w_j$ whenever either $w_iw_j \in E(H)$ or when there is a path between
$T_{w_i}$ and $T_{w_j}$ in $T$ that is disjoint from any $T_{w_k}$
with $w_i \neq w_k \neq w_j$.  We claim that $\hat H$ is in the
connection closure of $H$.  By construction, $\hat H$ is connected and
contains all components of $H$ as disjoint subgraphs, so we only need to
argue minimality.  $\hat H$ has no vertices besides those in $H$ so no
graph obtained by deleting a vertex would contain all components of
$H$ as subgraphs.  To see that no edge of $\hat H$ is superfluous, we
note it has exactly $e+m-1$ edges and thus no proper
subgraph could be connected and have all components of $H$ as disjoint
subgraphs. 
By the construction $\hat H \preceq G' \preceq G$, so by the transitivity of the minor relation we 
have that $\hat H \preceq G$.

Conversely let $G$ be an arbitrary graph and assume that $\hat H \in
\hat{M}$ and $\hat H \preceq G$. Because $\hat H$ is  connected, there
is a connected component $G'$ of $G$ such that $\hat H \preceq
G'$. 
Now there must be a graph $H \in M(\C)$ such that  $\hat H$ is in the
connection closure of $H$, and since $H$ is a subgraph of $\hat H$,  $H \preceq \hat H$.
Then, by the transitivity of the minor relation, $H \preceq G'$ and thus $G' \not\in \C$. Therefore $G$
is not a disjoint union of graphs from $\C$.
\end{proof}

Now we can put everything together and prove our main theorem: 

\begin{theorem}
There is a computable function which takes a set $M$ of excluded
minors characterising a minor-closed class $\C$ and $k \geq 0$ to the
set $M(\C_k)$. 
\end{theorem}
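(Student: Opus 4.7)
The plan is to proceed by induction on $k$, using the recursive definition $\C_0 = \C$, $\C_{i+1} = \overline{\apex{\C_i}}$. The base case is trivial, since $M(\C_0) = M$ is given. For the inductive step, assume we have computed $M(\C_i)$; then we compute $M(\C_{i+1}) = M(\overline{\apex{\C_i}})$ in two stages.

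First, I would apply Theorem~\ref{T:apex} to $M(\C_i)$ to obtain $M(\apex{\C_i})$. This is the computable function promised by Adler et al., so no further work is required here apart from invoking it.

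Second, I would apply Lemma~\ref{L:disjoint_union} to $M(\apex{\C_i})$ to obtain $M(\overline{\apex{\C_i}}) = M(\C_{i+1})$. Concretely, this means computing the connection closure of the finite set $M(\apex{\C_i})$ and then discarding non-minor-minimal elements. Computing the connection closure is finitary: for each graph $H$ in the set with $m$ components and $e$ edges, one enumerates all connected graphs on $V(H)$ that contain $H$ as a subgraph and have exactly $e + m - 1$ edges (as observed after Definition~\ref{D:cc}), and keeps those that are minimal with respect to the subgraph relation. Filtering for minor-minimal graphs among this finite list is also effective since the minor relation between finite graphs is decidable.

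Iterating this two-stage procedure $k$ times yields $M(\C_k)$, and since each stage is computable, the whole process is a computable function of $M$ and $k$. I do not anticipate a substantive obstacle here, as the two key ingredients (Theorem~\ref{T:apex} and Lemma~\ref{L:disjoint_union}) have already done the hard work; the remaining content is merely to observe that the two operations can be composed and iterated, and that at each step the outputs are finite sets of finite graphs, so the enumeration and minimality checks involved in realising the connection closure terminate effectively.
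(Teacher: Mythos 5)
Your proposal is correct and follows essentially the same route as the paper's proof: induction on $k$, invoking Theorem~\ref{T:apex} for the apex step and Lemma~\ref{L:disjoint_union} for the disjoint-union step. The extra detail you give about effectively enumerating the connection closure (connected supergraphs on $V(H)$ with exactly $e+m-1$ edges) and filtering for minor-minimal elements is a welcome elaboration of what the paper leaves implicit, but it does not change the argument.
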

\begin{proof}
The proof is by induction. For $k = 0$, the set of minimal excluded minors of $\C_0$ is $M(\C_0) = M(\C)$, which is given.
For $k > 0$, we have that $\C_k = \overline{\apex{\C_{k-1}}}$. By the induction hypothesis we can compute $M(C_{k-1})$, by Theorem~\ref{T:apex} we can compute $M(\apex{\C_{k-1}})$ and using Lemma~\ref{L:disjoint_union} we can compute the connection closure of $M(\apex{\C_{k-1}})$ to obtain $M(\overline{\apex{\C_{k-1}}}) = M(\C_k)$.
\end{proof}

So by the Robertson-Seymour Theorem we have the following:
\begin{corollary}
Let $\C$ be a minor-closed graph class. Then the
problem \textsc{Elimination Distance to Excluded Minors} is $\FPT$.
\end{corollary}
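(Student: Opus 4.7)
The plan is to reduce the problem to a bounded number of minor tests by leveraging the immediately preceding theorem. Given an input $(G, k, M)$ with $M = M(\C)$, I would first invoke the computable function guaranteed by that theorem to produce the finite set $M(\C_k)$ of minimal excluded minors characterising $\C_k$. This preprocessing is controlled entirely by the parameter $k + \sum_{H \in M}|H|$: its running time, the cardinality of $M(\C_k)$, and the size of each graph in $M(\C_k)$ are all bounded by some computable function of the parameter.

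By Proposition~\ref{P:char}, $G$ has elimination distance at most $k$ to $\C$ exactly when $G \in \C_k$, and since $\C_k$ is minor-closed this in turn holds precisely when no $H \in M(\C_k)$ satisfies $H \preceq G$. For each fixed $H$, the Robertson-Seymour minor-testing algorithm decides $H \preceq G$ in time $O(|V(G)|^3)$, with a constant depending on $H$. Iterating this test over the parameter-bounded set $M(\C_k)$ yields a total running time of $f(k, M) \cdot |V(G)|^3$ for a computable function $f$, which is exactly the definition of fixed-parameter tractability. Should the statement be read as \emph{exactly} $k$ rather than \emph{at most} $k$, one simply runs the same procedure for both $k$ and $k-1$ and compares, at no asymptotic cost.

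No step here should present a genuine obstacle, since the serious technical work has already been discharged by the previous theorem (which constructs $M(\C_k)$ from $M(\C)$ via iterated application of Theorem~\ref{T:apex} and Lemma~\ref{L:disjoint_union}) and by the Robertson-Seymour theorem (which supplies cubic-time minor testing). The corollary is essentially a packaging of these two ingredients with the characterisation of $\C_k$ from Proposition~\ref{P:char}.
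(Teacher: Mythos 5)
Your proof is correct and follows exactly the route the paper intends: compute $M(\C_k)$ via the preceding theorem (whose cost depends only on the parameter), use Proposition~\ref{P:char} to identify elimination distance at most $k$ with membership in $\C_k$, and decide membership by cubic-time minor tests against the finitely many graphs in $M(\C_k)$, as licensed by the Robertson--Seymour theorem. The paper leaves this packaging implicit ("So by the Robertson-Seymour Theorem\ldots"), so your write-up is simply a spelled-out version of the same argument.
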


\section{Conclusion}
\label{S:conclusion}

We are motivated by the study of the fixed-parameter tractability of edit distances in graphs.  Specifically, we are interested in edit distances such as the number of vertex or edge deletions, as well as more involved measures like elimination distance.  Aiming at studying general techniques for establishing tractability, we establish an algorithmic meta-theorem showing that any slicewise first-order definable and slicewise nowhere dense problem is $\FPT$.  This yields, for instance, the tractability of counting the number of vertex and edge deletions to a class of bounded degree.  As a second result, we establish that determining elimination distance to any minor-closed class is $\FPT$, answering an open question of~\cite{BD14}.

A natural open question raised by these two results is whether elimination distance to the class of graphs of degree $d$ is $\FPT$.  When $d$ is $0$, this is just the tree-depth of a graph, and this case is covered by our first result.  For positive values of $d$, it is not clear whether elimination distance is first-order definable.  Indeed, a more general version of the question is whether for any nowhere dense and first-order definable $\CC$, elimination distance to $\CC$ is $\FPT$.

Another interesting case that seems closely related to our methods, but is not an immediate consequence is that of classes that are given by first-order interpretations from nowhere dense classes of graphs.  For instance, consider the problem of determining the deletion distance of a graph to a disjoint union of complete graphs.  This problem, known as the cluster vertex deletion problem is known to be $\FPT$ (see~\cite{HKMN10}).
The class of graphs that are disjoint unions of cliques is first-order definable but certainly not nowhere dense and so the method of Section~\ref{S:nd} does not directly apply.  However, this class is easily shown to be interpretable in the nowhere dense class of forests of height~$1$.  Can this fact be used to adapt the methods of Section~\ref{S:nd} to this class?

\newpage
\bibliographystyle{amsplain}
\bibliography{distances_fpt.bib}

\end{document}